\newtheorem{lemma}{Lemma}{}
  \newtheorem{thm}{Theorem}
  \newtheorem{theorem}{Theorem}
  \newtheorem{cor}[thm]{Corollary}
  \newtheorem{definition}[thm]{Definition}
\def\R{\mathbb{R}}
\def\P{\mathbb{P}} 
\def\ie{{\em i.e.}}
\def\i{\mathbf{1}} 
\def\N{\mathcal{N}}
\def\ds{\sigma}
\def\ni{\mathcal{N}_i}
\def\nj{\mathcal{N}_j}
\def\lf{\beta}
\def\I{\mathcal{I}}
\def\x{\textbf{x}}
\def\p{\hat{p}}
\def\b{\hat{b}}
\def\H{\hat{H}}
\def\bv{\textbf{b}_v}
\def\bn{\textbf{b}_f}
\def\pv{\textbf{p}_v}
\def\pn{\textbf{p}_f}
\def\z{{\bf{z}}}
\def\lt{\tilde{\lambda}}
\def\be{\bm{\beta}}
\begin{document}

\title{Adaptive CSMA under the SINR Model: Efficient Approximation Algorithms for Throughput and Utility Maximization}

\author{Peruru Subrahmanya Swamy, Radha Krishna Ganti,~\IEEEmembership{Member,~IEEE,} Krishna Jagannathan,~\IEEEmembership{Member,~IEEE}
\thanks{A part of this work \cite{allerton_bethe} has been presented at IEEE Annual
  Allerton Conference on Communication, Control, and Computing (Allerton) 2015, held at Monticello, IL, USA}
\thanks{P. S. Swamy, R. K. Ganti and K. Jagannathan are with the Department of Electrical Engineering, IIT Madras, Chennai, India 600036. Email:$\{$p.swamy, rganti, krishnaj$\}$@ee.iitm.ac.in}}

\maketitle

\begin{abstract}
We consider a Carrier Sense Multiple Access (CSMA) based scheduling algorithm for a single-hop wireless network under a realistic Signal-to-interference-plus-noise ratio (SINR) model for the interference. We propose two local optimization based approximation algorithms to efficiently estimate certain attempt rate parameters of CSMA called fugacities. It is known that adaptive CSMA can achieve throughput optimality by sampling feasible schedules from a Gibbs distribution, with appropriate fugacities. Unfortunately, obtaining these optimal fugacities is an NP-hard problem. Further, the existing adaptive CSMA algorithms use a stochastic gradient descent based method, which usually entails an impractically slow  (exponential in the size of the network) convergence to the optimal fugacities. To address this issue, we first propose an algorithm to estimate the fugacities, that can support a given set of desired service rates. The convergence rate and the complexity of this algorithm are independent of the network size, and depend only on the neighborhood size of a link. Further, we show that the proposed algorithm corresponds exactly to performing the well-known Bethe approximation to the underlying Gibbs distribution. Then, we propose another local algorithm to estimate the optimal fugacities under a utility maximization framework, and characterize its accuracy. Numerical results indicate that the proposed methods have a good degree of accuracy, and achieve extremely fast convergence to near-optimal fugacities, and often outperform the convergence rate of the stochastic gradient descent by a few orders of magnitude.

\end{abstract}

\begin{IEEEkeywords}
CSMA, Gibbs distribution, Bethe approximation, Distributed algorithm, Wireless ad hoc network
\end{IEEEkeywords}

\section{Introduction}

The problem of link scheduling for maximum throughput has been widely studied, particularly with emphasis on the maximum-weight scheduling framework, developed in \cite{tassiulas,TE93}. In spite of its throughput maximizing property, maximum-weight scheduling requires centralized control, and necessitates the solution of an NP-hard problem for each scheduling decision. Several works have attempted to modify the maximum-weight algorithm, so as to make it more amenable to simple, distributed implementation \cite{chaporkar2008throughput,dimakis2006sufficient,wu06}. However, these greedy algorithms do not achieve full throughput.

In a series of recent papers \cite{libin,dshah,qcsma}, a class of distributed link scheduling algorithms called adaptive CSMA algorithms have been proposed and proven to be throughput optimal, \ie, they can support any achievable service rate vector. The central idea behind these algorithms lies in using a reversible Markov chain to sample feasible schedules from a product form distribution called the Gibbs distribution \cite[Chapter 7]{bremaud}. Specifically, each link adaptively adjusts its transmission attempt rate (also known as its fugacity) in order to ensure sufficient average service rate. 

In order to support a given feasible service rate vector using CSMA, the corresponding fugacities have to be computed. Unfortunately, determining the appropriate fugacities corresponding to the desired service rates is an NP-hard problem \cite{libin}. In \cite{libin}, the optimal fugacities are computed as a solution to an optimization problem (here after referred to as the Gibbsian problem), using a stochastic gradient descent algorithm. Each iteration of the gradient descent requires estimating the average service rates under the current iterate of the fugacities, which in turn entails waiting for the underlying Markov chain to reach steady-state. This `mixing time' of the underlying Markov chain could be very large (exponential in the size of the network), depending on the network load and topology \cite{hardness}, \cite{fast_mixing}. Therefore, the existing adaptive CSMA algorithms \cite{libin} do not provide a practical way to estimate the fugacities, although they effectively support the desired service rates once the optimal fugacities are estimated \cite{parallel_chains}. The main focus of this paper is in proposing efficient methods to estimate the fugacities, under a realistic SINR (signal-to-interference-plus-noise ratio) model for the interference. Specifically, we consider the following two scenarios under which fugacities are to be computed:
\begin{itemize}
\item The service rate requirements of the links are known. The objective is to support these average service rates.
\item The service rate requirements are not known, but each link has a utility function of its average service rate. The objective is to maximize the sum utility of the network.
\end{itemize}

 A simple conflict graph based interference model ~\cite{libin, dshah, qcsma}  is widely used in the wireless context due to its simplicity and tractability, although it does not adequately capture the complex nature of the wireless interference \cite{graph_limitations}. More specifically, a {conflict graph} based interference model ignores the fact that whether or not two links can transmit simultaneously, depends on the transmission state of the other links and their spatial locations. Some recent papers extend the adaptive CSMA framework to a more realistic interference models like the SINR model \cite{sinr_mimo, sinr_mimo_journal}, and Rayleigh faded channels \cite{ncc_paper}. However, these papers also essentially employ stochastic gradient descent on a Gibbsian function to estimate the fugacities, and hence suffer from impractically slow convergence rates. 
 
For the conflict graph based interference model assumed in \cite{libin}, approximate but efficient methods to compute the fugacities have been proposed \cite{bp_csma,bethe_jshin} using a popular variational technique called the Bethe approximation \cite{yedidia}. However, the solutions given in \cite{bp_csma,bethe_jshin}, cannot be directly extended to SINR based interference model. This is because the conflict graph based interference model corresponds to a simple pair-wise interaction model \cite {yedidia}, while the SINR model involves higher order interactions. The presence of these higher order interactions makes the extension non-trivial. Even for graphical models with higher order interactions, there are well known algorithms like Belief propagation (BP) \cite{yedidia} for solving the Bethe approximation problem. However, in the context of adaptive CSMA under the SINR model, they can directly be used only to estimate the service rates given the fugacities, but not the other way around. 




We start with the Gibbsian optimization problem corresponding to the optimal fugacities, and approximate this global problem by decoupling it into local optimization problems at each link. The local problems are identical in structure to the global problem, and are referred to as the local Gibbisan problems. The dimension of the local problem at a link is equal to the size of its immediate neighbourhood and hence typically small, and independent of the network size. Therefore these local Gibbsian problems can be efficiently solved in a scalable fashion. The local solutions are then suitably combined to obtain an approximate solution to the global problem.

We prove that the solution of our local Gibbs optimization method corresponds exactly to the celebrated Bethe approximation \cite{yedidia} to the global Gibbsian optimization problem. The accuracy of the Bethe approximation has been empirically evidenced in various fields \cite{bethe_emprical}. Therefore, it is reasonable to expect a fair degree of accuracy in the context of CSMA as well. In fact, numerical results indicate that in order to obtain the level of accuracy in the fugacities obtained by using our proposed method, the stochastic gradient descent method \cite{libin} takes an inordinately long time, often running into tens of millions of time units even for fairly small networks. Therefore, in practical terms, our algorithm will operate with substantially smaller convergence time, compared to the original implementation of adaptive CSMA.

It is worth noting that our algorithm is robust to gradual changes in the desired service rates, as well as to changes in the network topology, since the local solutions can be efficiently re-computed for the new set of requirements. On the other hand, the stochastic gradient descent is likely to take a very long time to converge to its new operating point. 

The impractically slow mixing time of the CSMA Markov chain is also known to result in poor delay performance \cite{hardness, fast_mixing}. Recent works like~\cite{parallel_chains, parallel_chains2} have shown improved  delay performance by employing several parallel instances of this Markov chain. However, the result in \cite{parallel_chains} assumes that the required optimal fugacities can be pre-computed and given to their algorithm. Our local algorithms which efficiently estimate these fugacities can be used in conjunction with the techniques in \cite{parallel_chains, parallel_chains2} to obtain a practical CSMA algorithm with good throughput and delay properties.

%
%
%
%
%
%
%

The remainder of this paper is organized as follows. In Section \ref{sec_model}, we describe the SINR based interference model, and review the adaptive CSMA algorithm. In Section \ref{local_gibbs}, we introduce the local Gibbsian problems and propose our algorithm for computing the fugacities for a given service rate requirements. Section \ref{preliminaries_bethe} provides a brief review of the Bethe approximation, as relevant to adaptive CSMA. In Section \ref{gibbs_bethe}, we derive our main result which establishes the equivalence between the local Gibbs optimization, and the Bethe approximation. In Section \ref{special_cases}, we consider the conflict graph model as a special case of the SINR model and derive closed-form expressions for the local Gibbsian problems. In Section \ref{util_max}, we propose a local algorithm to solve the utility maximization problem and quantify its performance. In Section \ref{simulations}, we present numerical results to confirm the fast convergence, and Section \ref{conc} concludes the paper.

\section{Model and Preliminaries} \label{sec_model}
We consider a single-hop wireless network and model the links using a bipole model, introduced in \cite{baccelli}. In a bipole model, each transmitter is associated with a receiver on the Euclidean plane.  A transmitter and its corresponding receiver are referred to as a link. Let $\N$ denote the set of all the links in the network. Let $|\N| =N$ be the total number of links. Let $r_{ii}$ denote the distance between the transmitter and receiver of link $i$. For simplicity, we assume\footnote{The results in this paper do not require this assumption. This is just to keep the expressions concise.} that a link distance $r_{ii}$ is much smaller than the distances of the transmitter and the receiver from the other links.  With this assumption, we can think of links as points in the Euclidean space $\R^2$. Let $r_{ji}$ denote the distance between the links $i, j$. We assume a time slotted model.

\emph{Interference model:}
We consider the standard path-loss model $\|d\|^{-\alpha}, \alpha>2$, where $d$ is the distance between a receiver and a transmitter, and $\alpha$ is the path loss exponent. Let $P_i$ denote the transmit power of link $i$. We assume white Gaussian thermal noise  at all the receivers with variance $w$. Let $\x(t)=[x_i(t)]_{i=1}^N$ denote the schedule of the network at time $t$. Specifically, $x_i(t)=1$ denotes that the link $i$ is active (transmitting) in time slot $t$. If there is no ambiguity, we will also use $\x$ to denote $\x(t)$.

  Although all the active links in the network can potentially contribute to the interference, the aggregate interference from the transmitters beyond a certain distance can be safely neglected \cite{radius_approx1}, \cite{radius_approx2}. This approximation is standard in the literature \cite{sinr_mimo} and this distance, referred to as the \emph{close-in} radius is denoted by $R_I$. Let $\N_i:= \lbrace k \; | \; r_{ik} \leq R_I \rbrace $. For convenience, let link $i$  be also included in the set $\N_i$. We refer to the links in $\N_i \setminus \{i\}$ as the neighbors of link $i$.  The neighborhood relationship can be represented by an \emph{interference graph} $G(V,E)$. $V$ is the set of links in the network and two links share an edge if they are within a distance $R_I$. Then the total interference power at link $i$ is given by
\begin{align}
I_i(\x) = \sum\limits_{ \{ j \in \N_i \;|\; j \neq i,\; x_j=1\}} P_j r_{ji}^{-\alpha}. \label{eq_int1}
\end{align}
Then, the SINR at link $i$ is given by
\begin{align}
\gamma_{i}(\x) =\frac{P_i r_{ii}^{-\alpha}}{I_i(\x)+w}. \label{eq_sinr}
\end{align}
 
\emph{Reception model:}
We assume that, in each time slot, a single packet of data is transmitted from each active transmitter. If the received SINR at the corresponding receiver exceeds a pre-determined threshold T, \ie, $\gamma_i(\x) \geq T$, the packet is successfully received.

\emph{Rate region:}
A schedule $\x \in \{0,1\}^{\N}$ is said to be \emph{feasible}, if all the active links in the schedule meet the required SINR constraint, \ie , $\gamma_i(\x) \geq T, \; \forall i \text{ such that } x_i =1.$ The set of all the feasible schedules is denoted by $\mathcal{I}$. In our scenario, since each link transmits one data packet whenever it is successful, the long-term service rate of a link is equal to the fraction of time the link is successful. The \emph{rate region} $\Lambda$, which is defined as the set of all the possible service rates is given by the convex hull of the feasible schedules in $\mathcal{I}$.  Hence, $\Lambda= \{ \sum_{ \x \in \mathcal{I}} \alpha_{\x} \x \; | \; \sum_{\x \in \mathcal{I}} \alpha_{\x}=1, \; \alpha_{\x} \geq 0, \; \x \in \mathcal{I} \}$. If a link scheduling policy can support any rate vector in the rate region, then the scheduling policy is said to be \emph{rate-optimal}. 

\emph{Adaptive CSMA:} 
\label{subsec_acsma}
We briefly review the adaptive CSMA algorithm \cite{libin, sinr_mimo_journal}. In this algorithm, each link $i$ is associated with a fugacity $\lambda_i > 0$ which defines the underlying Gibbs distribution. In each time slot, a randomly selected link $i$ is allowed to update its schedule $x_i(t)$ based on the information in the previous slot:
\begin{itemize}
\item If its SINR is inadequate, \ie,  $\gamma_i(\x(t-1)) \leq T$, then $x_i(t)=0$.
\item If $\gamma_i(\x(t-1)) \geq T$, then link $i$ exchanges control messages with its neighbors, to find if they can meet their SINR requirements if link $i$ gets activated. If any of its neighbors cannot meet its requirement, then $x_i(t)=0$.
\item If all the neighbors can meet their SINR requirements even if link $i$ gets activated, then $x_i(t)=1$ with probability $\frac{\lambda_i}{1+\lambda_i}$, and $x_i(t)=0$ with probability $\frac{1}{1+\lambda_i}$.
\end{itemize}
\emph{Remarks on Implementation:}  A challenge in the second step of the above algorithm is to ensure that a newly scheduled link does not alter the SINR requirements of its transmitting neighbours. This problem is typically addressed by introducing a control subslot during which nodes exchange control packets to determine the feasibility of transmissions. More specifically, \cite{sinr_mimo_journal} proposed a control protocol which includes a three-way handshake of control packets: Ready-To-Send (RTS), Clear-To-Send (CTS), and REJECT. In this protocol, if a link is selected to update its status, its transmitter sends an RTS in the control subslot. If any active neighbouring link in the previous schedule $\x(t-1)$ fails to meet the required SINR, the neighbour broadcasts a REJECT signal and disapproves the transmission request. If no REJECT signal is broadcasted, the selected link proceeds to transmit. Detailed descriptions of this handshake protocol can found in \cite[Section III-C]{sinr_mimo_journal}.

It can be shown \cite[Proposition 1]{qcsma} that the adaptive CSMA algorithm induces a Markov chain on the state space of the schedules $\{0,1\}^\N$. Further, the stationary distribution of the Markov chain, parametrized by the fugacity vector $\lambda=[\lambda_i]_{i=1}^N$, is given by:
\begin{align}
p(\x)&= \frac{1}{Z}\prod\limits_{  j \; : \; x_j=1}\lambda_j \; \i (\x \in \mathcal{I}) , & \forall \x \in \{0,1\}^N,  \label{eq_dist}
\end{align}
where $\i(\x \in \mathcal{I})$ is an indicator of $\x$ being a feasible schedule, and $Z$ is the normalizing constant. Then, due to the ergodicity of the Markov chain, the long-term service rate of a link $i$ denoted by $s_i$ is equal to the marginal probability that link $i$ is active, \ie, $p_i(x_i=1)$. Thus, the service rates and the fugacities are related as follows:
\begin{align}
s_i&= p_i(1) = \sum\limits_{\x \; : \; x_i=1 } Z^{-1} \prod\limits_{ j \; : \; x_j=1}\lambda_j, \; \; \forall i \in \N, \label{eq_serv_fug}
\end{align}
where $p_i(1)$ denotes $p_i(x_i=1)$.
The adaptive CSMA algorithm can support any service rate in the rate region provided appropriate fugacities are used for the underlying Gibbs distribution \cite[Theorem 5]{fast_mixing} . 

If the desired service rates are known, these fugacities can be obtained by solving the system of equations in \eqref{eq_serv_fug}.   In \cite[Section 3.3]{libin_book}, it is shown that solving this system of equations can be posed as a convex optimization problem. The main idea is explained as follows. Given a set of service rates from the rate region, by the definition of rate region, there should exist some distribution $\alpha(\x)$ on the state space of feasible schedules $\I$ which would support the required service rates. Note that the definition of rate region does not impose that the distribution $\alpha(\x)$ is a Gibbs distribution. In \cite[Section 3.3]{libin_book}, the authors find a Gibbs distribution that is close to this distribution $\alpha(\x)$. It is achieved by minimizing the KL divergence~\cite[Section 3.3]{libin_book} between $\alpha(\x)$ and the family of Gibbs distributions parameterized by the fugacity vectors. Further it has been shown \cite [Section 3.3]{libin_book} that the resulting minimization problem is equivalent to the following Gibbsian optimization problem referred to as the global problem.  

\noindent \textbf{The global Gibbsian problem:}
\begin{align}
\ln \lambda &= \arg\max_{r \in \R^{\N}} G(r), \label{opt_global} \\
\text{where  }G(r)&:=\sum\limits_{k \in \N} s_k r_{k} - \ln \Big(\sum\limits_{y \in \I} \exp\Big(\sum\limits_{k \in \N} y_k r_{k} \Big)\Big). \nonumber
\end{align} 
Here $\{s_i\}_{i \in \N} \in \Lambda$ are the desired service rates.

\noindent \emph{Remark:}  To understand that the global Gibbsian problem \eqref{opt_global} solves the system of equations in \eqref{eq_serv_fug}, we can simply set  $\frac{\partial G(r)}{\partial r_i}=0$ to obtain
\begin{align*}
s_i - \frac{\sum_{y \in \I : y_i=1} \exp\Big(\sum_{k=1}^N y_k r^*_k\Big)}{\sum_{y \in \I} \exp\Big(\sum_{k=1}^N y_k r^*_k\Big)}=0, \; \forall i \in \N.
\end{align*}
Observe that for $\lambda_i= e^{r_i^*}$, the above equations essentially boil down to the desired equations in \eqref{eq_serv_fug}.

A distributed stochastic gradient descent algorithm was proposed in \cite{libin} to solve \eqref{opt_global}. However, estimating the gradient of $G(r)$ in a distributed manner entails the underlying Markov chain of the CSMA algorithm to converge to steady-state, which takes an impractically long time in general \cite{parallel_chains}. 

In the next section, we consider this scenario where the links know their target service rates. We provide an efficient and scalable approximation to this problem, by proposing the following local Gibbsian problems. The solutions of these local problems are appropriately combined to estimate the solution to the global problem.

\section{The local Gibbsian problems} \label{local_gibbs}
We now introduce some definitions required for the description of the local Gibbsian problems.

\emph{Local schedule:} Let $\x^{(j)} \in \{0,1\}^{\N_j}$, be the set of variables corresponding to the transmission status of the link $j$ and its neighbors, \ie,
$\x^{(j)}:=\left\lbrace x_k \; | \; k \in \N_j  \right\rbrace.$
We refer to $\x^{(j)}$ as the \emph{local schedule} at $j$. Further, from \eqref{eq_int1}, it can be observed that the SINR of a link depends only on the local schedule. Hence the SINR at a link $j$ can be viewed as a function of $\x^{(j)}$, \ie, $\gamma_j(\x) = \gamma_j(\x^{(j)})$. Also, recall that a schedule $\x$ is said to be feasible, if all the active links in the schedule meet their required SINR threshold. Thus, $\i(\x \in \mathcal{I})$ can be factorized over the local schedule variables $\{\x^{(j)}\}_{j=1}^N$ as $\i(\x \in \mathcal{I})= \prod\limits_{j \;: \;x_j=1} \i (\gamma_j(\x^{(j)}) \geq T).$ Then, \eqref{eq_dist} can be written as
\begin{align}
p(\x)&= \frac{1}{Z} \prod\limits_{j \;: \;x_j=1} \lambda_j \i (\gamma_j(\x^{(j)}) \geq T), & \forall \x \in \{0,1\}^N. \label{eq_succint}
\end{align}

\emph{Local feasiblity:} A local schedule $y = [y_k]_{k \in \N_j} \in \left\lbrace 0, 1 \right\rbrace^{\N_j}$ at link $j$ is said to be \emph{feasible}, if either the link $j$ is inactive (\ie, $y_j=0$), or it is active and meets the required threshold SINR \ie, $(y_j=1$ and $\i (\gamma_j(y) \geq T))$. The set of all the feasible local schedules at $j$ is denoted by $\I_j$. It can be observed from \eqref{eq_succint} that $p(\x)$ assigns zero probability to a schedule $\x$ if any of its local schedule is infeasible.

\emph{Local service rate vector:} Let $\{s_i\}_{i=1}^N$ be the set of service rates of all the links in the network. Then the local service rate vector at link $j$ denoted by $s^{(j)}$ be defined as the set of all the service rates corresponding to link $j$ and its neighbors, \ie, $s^{(j)}:=\left\lbrace s_k \; | \; k \in \N_j  \right\rbrace.$

\emph{Local capacity region:} The local capacity region at link $j$ is defined as the convex hull of the local feasible schedules at link $j$ given by
\begin{align}
\Lambda_j= \{ \sum_{ z \in \mathcal{I}_j} \alpha_{z} z \; | \; \sum_{z\in \mathcal{I}_j} \alpha_{z}=1, \; \alpha_{z} \geq 0, \; z \in \mathcal{I}_j \}. \label{eq_loc_cap}
\end{align}
Now, we define the local Gibbsian problem at link $j$ as follows:
\begin{align}
\lf_{j} = \arg\max\limits_{r \in \R^{\N_j}} F(r), \label{opt_local_alternate0}
\end{align} 
where $\lf_j:=[\lf_{jk}]_{k \in \N_j}$, and the function $F: \R^{\N_j} \rightarrow \R$ is
\begin{align*}
F(r):=\sum\limits_{k \in \N_j} s_k r_{k} - \ln \Big(\sum\limits_{y \in \I_j} \exp\Big(\sum\limits_{k \in \N_j} y_k r_{k} \Big)\Big).
\end{align*} 
Observe that, the local problems are structurally similar to the global problem, except that $\I$ in the global problem is replaced by $\I_j$, and $\N$ is replaced by $\N_j$. In particular, the dimension of the local problem at link $i$ is just $|\N_i|$. The solutions to these local problems are referred to as the \emph{local fugacities}. In particular, at each link $j \in \N$, there is a local fugacity vector  $\lf_j:=[\lf_{jk}]_{k \in \N_j}$.

\emph{Local algorithm:}
Here, we propose a simple and distributed algorithm (Algorithm 1) to solve the local Gibbsian problems and subsequently compute the approximate global fugacities by combining the local solutions using \eqref{eq_lf_gf0}. These approximate global fugacities can be directly used in the CSMA algorithm instead of adapting the fugacties using a stochastic gradient descent on the global problem which usually doesn't converge in practical time scales. Each link in the network executes the following algorithm in parallel.
\noindent\rule[0.5ex]{\linewidth}{0.5pt}
\textbf{Algorithm 1:} Local Gibbsian method at link $j$\\
\noindent\rule[0.5ex]{\linewidth}{0.5pt}
\; \; \emph{Input:} ($s_{k}$,  $k \in \nj$); \hspace{0.25cm}
\emph{Output:} $\lt_j$.
\begin{enumerate}
\item Obtain the service rates ($s_{k}$,  $k \in \nj$) from the neighbours.
\item Compute the local fugacities ($\lf_{jk}$,  $k \in \nj$) by solving the local problem \eqref{opt_local_alternate0}  using the Newton's method.
\item From each neighbour $k \in \N_j$, obtain the local fugacity $\lf_{kj}$.
\item Compute the approximate global fugacity $\lt_j$ as
\begin{align}
\lt_j&= \left(\frac{1-s_j}{s_j}\right)^{|\N_j|-1}\prod\limits_{k \in \N_j} e^{\lf_{kj}}. \label{eq_lf_gf0}
\end{align}
\end{enumerate}
\noindent\rule[0.5ex]{\linewidth}{0.5pt}

\emph{Information exchange:} The algorithm requires only two steps of information exchange with the neighbours. Once in the first step, to obtain the service requirements of the neighbours, and again in the third step to obtain the local fugacities computed at the neighbours. Except for these two information exchanges, the algorithm is fully distributed and can be executed independently at each link.

\emph{Computational complexity:}
The implementation of the Newton's method \cite[Section 9.5]{boyd} in the second step of our algorithm is feasible. This is because the gradient and the Hessian of the local objective function $F(r)$ can be analytically computed since the dimension of the problem is small. The exact expressions for the gradient and Hessian at link $j$ can be computed using certain marginals of the distribution  $\b_j(\x^{(j)}) =  {Z_j^{-1}} \exp (\sum_{k \in \N_j} x_k r_k ),  \forall \x^{(j)} \in \I_j$, where $Z_j$ is a normalization constant. 

For $k \in \N_j$, let $m_k(r)$ represent the probability $\P(x_k=1)$ under the distribution $\b_j$. Similarly, for $i,k \in \N_j$, let $m_{ik}(r)$ denote the probability $\P(x_i=1, x_k=1)$ under the same distribution $\b_j$. 
Then, the gradient and Hessian of the function $F(r)$ are given by
\begin{align*}
[\nabla(F(r))]_k&= s_k-m_k(r), \; \;  k \in \N_j, \\
[\nabla^2(F(r))]_{ik}&= \begin{cases}
m_{ik}(r)-m_i(r)m_k(r) , \; \;  i,k \in \N_j, i \neq k\\
m_k(r)-m_k(r)^2, \; \; i=k.
\end{cases}
\end{align*}
The computation of the gradient and the Hessian requires the information about the local feasible schedules at a link. Specifically, this information is required to compute the normalization constant $Z_j$.  These computations are feasible because the $O(2^{|\N_j|})$ complexity involved in this computation scales only with the size of the local neighborhood, and is independent of the total size of the network which could be substantially large. In particular, in spatial networks where the neighborhood size does not scale with the network size, our algorithm is order optimal.
%
%
%
%
%

In Section \ref{gibbs_bethe}, we prove that the approximate global fugacites $\{\lt_j\}_{j=1}^N$ obtained using the local Gibbsian method \eqref{eq_lf_gf0} correspond exactly to performing the well known Bethe approximation to the global Gibbsian problem. In the next section, we review the Bethe approximation technique.

\section{Review of the Bethe approximation} \label{preliminaries_bethe}

 We now introduce some terminology required to describe the Bethe approximation technique \cite{yedidia}. 
\subsection{Product form distribution}
Let $S$ be a finite set, and let $X_i, i=1, 2, \dots, N$, be random variables each taking values in $S$. The joint PMF (probability mass function) of the random variables is succinctly denoted as $p(\x)$, where $\x=\{x_1,x_2, \dots, x_N \}$. Suppose that $p(\x)$ factors into a product of $M$ functions and is given by,
\begin{align}
p(\x)&= \frac{1}{Z} \prod\limits_{j=1}^M f_j(\x^{(j)}), \hspace{0.7cm} \x \in S^N.  \label{eq_factors}
\end{align}
The function $f_j(\cdot)$ has arguments $\x^{(j)}$ that are some non-empty subset of $\x=\{x_1,x_2, \dots, x_N \}$. Here $Z$ is a normalization constant. Further, the product form distributions are generally represented using a graph called factor graph \cite{yedidia} that has two sets of nodes namely variable nodes and factor nodes corresponding to the variables and the functions in \eqref{eq_factors} respectively. An edge is drawn between a variable node and a factor node if the variable is an argument of that factor function.

For the product form distribution considered in \eqref{eq_factors}, we are interested in certain marginal probabilities called the variable marginals and the factor marginals.

\noindent \emph{Variable node marginals: }
The marginal probability distribution $p_i({x_i})$ corresponding to a variable node $i$, is obtained by summing $p(\x)$ over the variables corresponding to all other variable nodes, \ie, $p_i({x_i})= \sum_{\x \setminus x_i} p(\x)$, $x_i \in S$, $ i=1,\dots, N.$

\noindent \emph{Factor node marginals:} Corresponding to each factor node, $j=1, \dots, M$, the marginal probability function $\p_j({\x^{(j)}})$,  is obtained by summing $p(\x)$ over all the variables in $\x \setminus \x^{(j)}$. Let $N_j$ denote the number of arguments in $f_j$, \ie, $N_j= |\x^{(j)}|$. Then, 
$
\p_{j}({\x^{(j)}}) = \sum_{\x \setminus \x^{(j)}} p(\x)$, $\x^{(j)} \in S^{N_j}.$

Let $\pv:=\{p_i\}_{i=1}^N$, $\pn:=\{\p_j\}_{j=1}^M$ denote the collection of all the variable node marginals and  the factor node marginals corresponding to the distribution $p(\x)$ respectively.

The computation of these marginal probability distributions requires the computation of the normalization constant $Z$, which is an NP-hard problem \cite{yedidia}. Next, we discuss the notions of Gibbs free energy (GFE) and the Bethe free energy (BFE) which provide a variational characterization of the normalization constant and the marginal distributions.

\subsection{Gibbs free energy}
Consider a probability distribution $p(\x)$ of the form \eqref{eq_factors} for which we are interested in finding the normalization constant $Z$, and the marginals. Let $b(\x)$ be some distribution on $S^N$. Here, $p(\x)$ is referred to as the \emph{true} distribution, and $b(\x)$ is referred to as the \emph{trial} distribution as it will be used to estimate the true distribution $p(\x)$.
We now introduce the notion of energy function which is required to define the Gibbs free energy \cite{yedidia}. For a given distribution $p(\x)$ in the form of \eqref{eq_factors}, the energy function $E(\x) : S^N \rightarrow \R$ is defined as $E(\x)= - \sum_{j=1}^{M} \ln f_j(\x^{(j)}).$ Observe that the energy function $E(\x)$ completely specifies the distribution $p(\x)$ in \eqref{eq_factors}, as $p(\x)= \frac{1}{Z} e^{-E(\x)}.$ Using the definition of the energy function, the GFE can be defined as follows. 
\begin{definition}
Consider an energy function $E(\x)$ which corresponds to a true distribution $p(\x)$. Let $b(\x)$ be a trial distribution. Then the Gibbs free energy $F_G(b)$ is defined as $F_G(b)= U_G(b) - H_G(b)$, where the term $U_G(b)=\sum_{\x \in S^N} b(\x) E(\x)$ is called the average energy, and $H_G(b)= - \sum_{\x \in S^N} b(\x) \ln b(\x)$ is the entropy of the distribution $b(\x)$.
\end{definition}
The GFE provides a variational characterization of the normalization constant $Z$ of $p(\x)$ as $- \ln Z = \min_b F_G(b).$ Further, if the minimum of $F_G(b)$ is achieved at $b^*$, then $p(\x)=b^*(\x), \; \forall \x \in S^N.$ 

Here, the optimization is over all the possible distributions on $S^N$. However, as $N$ becomes large, this procedure is intractable, as the optimization variables take exponentially large memory to store.  Moreover, this method only computes the partition function, but doesn't explicitly compute the marginals. We hence take recourse to the Bethe approximation, which is an approximate, but a more practical technique to estimate the marginals explicitly. In particular, the Bethe approximation does two approximations: First, the GFE $F_G(b)$ is approximated using the BFE $F_B(b)$ defined in \eqref{eq_bfe_eg}. Second, the optimization of the BFE is performed over a restricted set of distributions, which will be described in \eqref{eq_bethe_min}.

\subsection{Bethe approximation}
\subsubsection{Bethe free energy (BFE)}
We first define the Bethe entropy of a distribution $b(\x)$, which is a function of the factor and variable marginals of $b(\x)$. The Bethe entropy is given by
\begin{align*}
H_B\left( \bn, \bv \right)
&=\sum\limits_{j=1}^M \H_j(\b_j) - \sum\limits_{i=1}^N (d_i-1) H_i(b_i).
\end{align*}
Here, $d_i$ is the degree of the variable node $i$ in the factor graph, and $\H_j(\b_j)= -\sum_{y \in S^{N_j}} \b_j(y) \ln \b_j(y)$, $
H_i(b_i)= -\sum_{y \in S} b_i(y) \ln b_i(y)$ denote the entropies of the factor marginal $\b_j(\x^{(j)})$, and the variable marginal $b_i(x_i)$ respectively.
\begin{definition}
Consider a true distribution $p(\x)$ of the form \eqref{eq_factors}. Let $b(\x)$ be a trial distribution with factor and variable marginals given by $\bn=\{\b_j\}_{j=1}^M$ and $\bv=\{b_i\}_{i=1}^N$. Then, the Bethe free energy $F_B(b)$ corresponding to the true distribution $p(x)$ is defined as
\begin{align}
F_B(b)&=F_B\left( \bn, \bv \right)=U_B(\bn, \bv) - H_B(\bn, \bv),  \label{eq_bfe_eg}
\end{align}
\vspace{-5mm}
\begin{align}
\text{where } U_B(\bn, \bv)=-\sum\limits_{j=1}^M \sum\limits_{{\x^{(j)}} \in S^{N_j}}  \b_j(\x^{(j)}) \ln f_j(\x^{(j)}). \label{eq_ub}
\end{align}
\end{definition}
 \subsubsection{Bethe optimization (BO)}
The following optimization problem is referred to as the Bethe optimization.
\begin{align}
&\underset{\bn, \bv} {\arg \min}\; \;  F_B(\bn, \bv), \text{  subject to}  \label{eq_bethe_min}\\
 b_i(x_i) &\geq 0, \hspace{0.4cm} i = 1, \dots, N, \; \;  x_i \in S,\nonumber\\ 
  \sum_{x_i} b_{i} (x_i)&=1, \hspace{0.4cm} i = 1, \dots, N, \nonumber\\ 
\b_j(\x^{(j)}) &\geq 0, \hspace{0.4cm} j = 1, \dots, M, \; \;   \x^{(j)} \in S^{N_j},\nonumber\\
 \sum_{\x^{(j)}} \b_{j} (\x^{(j)})&=1, \hspace{0.4cm} j = 1, \dots, M, \nonumber\\
 \sum_{\x^{(j)} \setminus \{x_i \}} \b_j(\x^{(j)})&= b_i(x_i), \hspace{0.2cm} j=1, \dots, M, \; i \in \N_j,\; x_i \in S ,\nonumber
\end{align}
where $\N_j$ is the set of indices of all the variable nodes associated with the factor node $j$.

\emph{Note:}
A feasible collection $(\bn,\bv)$ of the Bethe optimization problem does not necessarily represent the marginals of any coherent joint distribution over $S^N$, and are therefore referred to as \emph{pseudo-marginals} \cite{yedidia}. In spite of performing the optimization over a relaxed set, the solution of the Bethe optimization $(\bn^*, \bv^*)$ exactly coincides with the marginals of the true distribution $(\pn^*, \pv^*)$, if the underlying factor graph is a tree. Further, for factor graphs with loops, the solution leads to very good estimates of the true marginals \cite{yedidia, bethe_emprical}. In many applications, solving for the global minimizer of \eqref{eq_bethe_min} is not feasible because of its non-convexity. Hence, a stationary point of the BFE satisfying the constraints in \eqref{eq_bethe_min} is considered as a good estimate of the true marginals \cite{yedidia, bethe_emprical}.

\subsection{Bethe approximation for CSMA}
We observe that the stationary distribution of the CSMA Markov chain $p(\x)$ given by \eqref{eq_succint} is in product form \eqref{eq_factors} with the factor functions $\{f_j \}_{ j=1}^N$ defined as
\begin{align}
f_j(\x^{(j)})= \begin{cases}
\lambda_j, & \text{if }  x_j=1, \gamma_j(\x^{(j)}) \geq T, \label{eq_local_factors} \\
  0, & \text{if } x_j=1, \gamma_j(\x^{(j)}) < T, \\
 1,  & \text{if } x_j = 0.
\end{cases}
\end{align}
where $\x^{(j)}$ is the local schedule of the link $j$ defined earlier.

By the definition of local feasibility, $p(\x)$ assigns zero probability to a schedule $\x$, if it is locally infeasible at any link. Hence, the factor node marginals $\pn = \{\p_j\}_{j=1}^N$ have to satisfy $\p_j(y)=0, \; \; \; \forall y \notin \I_j, \;  \forall j \in \N.$

\textbf{BFE for CSMA:}
We shall consider the product form representation of the CSMA stationary distribution $p(\x)$ in \eqref{eq_succint} and compute the BFE. As defined in \eqref{eq_bfe_eg}, the BFE for a given trial distribution $b(\x)$ has two terms $U_B(\bn,\bv)$, $H_B(\bn, \bv)$. Now we compute the first term $U_B(\bn, \bv)$. Specifically, we show that the term $U_B(\bn, \bv)$ which is in general a function of the factor marginals $\bn$, reduces to a function of just the variable marginals $\bv$, subject to the feasibility conditions stated in the following lemma.
\begin{lemma} \label{lemma_u_bethe}
Let $(\bn,\bv)$ be a set of feasible factor and variable marginals of the Bethe optimization problem \eqref{eq_bethe_min}.  If the factor marginals of a trial distribution $\bn$, assign zero probabilities to all the infeasible local schedules, \ie, 
 \begin{align}
 \b_j(y)=0, \;  \forall y \notin \I_j, \;  \forall j \in \N, \label{eq_bn_feasible}
 \end{align}
then $U_B(\bn, \bv)$ reduces to $U_B(\bn, \bv)=$ $U_B(\bv)= -\sum_{i=1}^N b_i(1) \ln\lambda_i.$ Moreover, if $\bn$ does not satisfy \eqref{eq_bn_feasible}, $U_B(\bn, \bv)=  \infty$, which in turn implies $F_B(\bn, \bv)=\infty$.
\end{lemma}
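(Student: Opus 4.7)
The plan is to expand the definition
\[
U_B(\bn,\bv)=-\sum_{j=1}^{N}\sum_{\x^{(j)}\in\{0,1\}^{\N_j}}\b_j(\x^{(j)})\ln f_j(\x^{(j)})
\]
using the explicit three-case form of $f_j$ from \eqref{eq_local_factors}. For a fixed link $j$ I would partition the inner sum according to (a) $x_j=0$, (b) $x_j=1$ with $\gamma_j(\x^{(j)})\geq T$, and (c) $x_j=1$ with $\gamma_j(\x^{(j)})<T$. Case (a) contributes nothing since $\ln f_j=\ln 1=0$. Case (c) is precisely the set $\{y\notin \I_j\}$ by the definition of local feasibility.

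Under hypothesis \eqref{eq_bn_feasible}, the case (c) summands vanish, since $\b_j(y)=0$ and we use the standard convention $0\cdot\ln 0=0$. Case (b) yields $\ln f_j(\x^{(j)})=\ln\lambda_j$, so that the inner sum collapses to
\[
-\ln\lambda_j\sum_{\substack{\x^{(j)}\in\I_j \\ x_j=1}}\b_j(\x^{(j)})=-\ln\lambda_j\sum_{\x^{(j)}:\,x_j=1}\b_j(\x^{(j)}),
\]
where the second equality again uses \eqref{eq_bn_feasible}. The marginal-consistency constraint in \eqref{eq_bethe_min} (applied with $i=j$, which always belongs to $\N_j$) identifies this sum with $b_j(1)$. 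Summing over $j$ gives $U_B(\bn,\bv)=-\sum_{i=1}^{N}b_i(1)\ln\lambda_i$, which is independent of $\bn$, as claimed.

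For the converse, suppose \eqref{eq_bn_feasible} fails, so that there exist $j$ and $y\notin\I_j$ with $\b_j(y)>0$. By local infeasibility, $y_j=1$ and $\gamma_j(y)<T$, hence $f_j(y)=0$ and $-\b_j(y)\ln f_j(y)=+\infty$. Since every other summand is either zero or of the form $-\b_j(\cdot)\ln\lambda_j$ (finite because $\lambda_j>0$), the overall sum is $+\infty$, giving $U_B(\bn,\bv)=\infty$. Because the state space is finite and $(\bn,\bv)$ are valid probability distributions, $H_B(\bn,\bv)$ is bounded, so $F_B(\bn,\bv)=U_B-H_B=\infty$ as well.

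The only real subtlety is the $0\cdot\infty$ situation arising from $\b_j(y)\ln 0$ when $\b_j(y)=0$; I would flag the standard convention explicitly so that the partition of the sum is well-defined. Everything else is just bookkeeping: the case analysis on $f_j$ plus one application of the marginal-consistency constraint.
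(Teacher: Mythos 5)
Your proof is correct and follows essentially the same route as the paper's: split the inner sum of $U_B$ according to the three cases of $f_j$ in \eqref{eq_local_factors}, kill the infeasible-schedule terms using \eqref{eq_bn_feasible}, and invoke the marginal-consistency constraint of \eqref{eq_bethe_min} with $i=j$ to identify the remaining sum with $b_j(1)$. In fact you are slightly more thorough than the paper, since you explicitly handle the $0\cdot\ln 0$ convention and spell out the ``moreover'' ($U_B=\infty$) case, which the paper's proof leaves implicit.
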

\begin{proof} 
Recalling the definition of local feasibility (introduced in Section \ref{local_gibbs}), we obtain
\begin{align*}
\I_j=\{\x^{(j)}\;|\; x_j = 0\} \cup  \{\x^{(j)}\;|\; x_j=1, \gamma_j(\x^{(j)}) \geq T \}.
\end{align*}
Consider the inner summation of $U_B(\bn,\bv)$ in \eqref{eq_ub}, and expand it into two terms as follows:
\begin{align}
&\sum\limits_{\x^{(j)} \in \{0,1\}^{\N_j}} \b_j(\x^{(j)})  \ln f_j(\x^{(j)}) \nonumber\\
&=\sum\limits_{ \{\x^{(j)} \in \I_j \} } \b_j(\x^{(j)}) \ln f_j(\x^{(j)}) +\sum\limits_{\{\x^{(j)} \notin \I_j \}} \b_j(\x^{(j)}) \ln f_j(\x^{(j)}). \label{eq_summation}
\end{align}
Now, we evaluate the first term of \eqref{eq_summation}.

\vspace{-4mm}
\begin{scriptsize}
\begin{align}
&\sum\limits_{ \{\x^{(j)} \in \I_j \} } \b_j(\x^{(j)}) \ln f_j(\x^{(j)}) \overset{(a)}{=} \nonumber \\
&\sum\limits_{\{\x^{(j)}\;|\; x_j = 0\}} \b_j(\x^{(j)}) \ln f_j(\x^{(j)}) +\sum\limits_{\{\x^{(j)}\;|\; x_j=1, \gamma_j(\x^{(j)}) \geq T \}} \b_j(\x^{(j)}) \ln f_j(\x^{(j)})   \nonumber, \\
&\overset{(b)}{=}\sum\limits_{\{\x^{(j)}\;|\; x_j = 0\}} \b_j(\x^{(j)}) \ln 1 + \sum\limits_{\{\x^{(j)}\;|\; x_j=1, \gamma_j(\x^{(j)}) \geq T \}} \b_j(\x^{(j)}) \ln \lambda_j  \nonumber,  \\
&=\ln \lambda_j \sum\limits_{\{\x^{(j)}\;|\; x_j=1, \gamma_j(\x^{(j)}) \geq T \}} \b_j(\x^{(j)}).\label{eq_lemma_ub1}
\end{align}
\end{scriptsize}
\vspace{-4mm}

where $(a)$ follows from the definition of local feasibility, $(b)$ follows from the defintion of factors in \eqref{eq_local_factors}.

We now show that the second term of \eqref{eq_summation} evaluates to zero. If $\x^{(j)} \notin \I_j$, then from \eqref{eq_local_factors} and the assumption \eqref{eq_bn_feasible}, we have $f_j(\x^{(j)})=0$ and $\b_j(\x^{(j)})=0$ respectively. Hence the summation corresponding to ${\{\x^{(j)} \notin \I_j \}}$ in \eqref{eq_summation} is equal to zero.

Now, we use the fact that $(\bn,\bv)$ is feasible for the Bethe optimization problem \eqref{eq_bethe_min}. Specifically, we use the last equality constraint in \eqref{eq_bethe_min} for $i=j$ case (recall from the definition of $\N_j$, that the link $j$ is also included in the set $\N_j$) to obtain

\vspace{-3mm}
\begin{scriptsize}
\begin{align}
&b_j(1)=\sum\limits_{\{\x^{(j)}\;|\; x_j=1\}} \b_j(\x^{(j)}), \nonumber \\
&=\sum\limits_{\{\x^{(j)}\;|\; x_j=1, \gamma_j(\x^{(j)}) \geq T \}} \b_j(\x^{(j)}) + \sum\limits_{\{\x^{(j)}\;|\; x_j=1, \gamma_j(\x^{(j)}) < T \}} \b_j(\x^{(j)}), \nonumber \\
&\overset{\eqref{eq_bn_feasible}}{=}\sum\limits_{\{\x^{(j)}\;|\; x_j=1, \gamma_j(\x^{(j)}) \geq T \}} \b_j(\x^{(j)}) + 0. \label{eq_lemma_ub2}
\end{align}
\end{scriptsize}
\vspace{-3mm}

Substituting  \eqref{eq_lemma_ub1}, \eqref{eq_lemma_ub2} in \eqref{eq_summation}, we obtain
\begin{align*}
\sum\limits_{\x^{(j)} \in \{0,1\}^{\N_j}} \b_j(\x^{(j)})  \ln f_j(\x^{(j)}) = b_j(1) \ln \lambda_j .
\end{align*}
Using the above equation in the definition of $U_B(\bn,\bv)$ in \eqref{eq_ub} completes the proof of this Lemma.
\end{proof}
Using Lemma \ref{lemma_u_bethe} and \eqref{eq_bfe_eg}, the BFE for any $(\bn, \bv)$ that satisfies the feasibility constraints in \eqref{eq_bethe_min}, \eqref{eq_bn_feasible} is given by
\begin{align}
F_B(\bn,\bv)&=F_B\left( \{\b_j\}_{j=1}^N,  \{b_i\}_{i=1}^N \right),  \label{eq_bfe}\\
&= \sum\limits_{i=1}^N  -b_i(1) \ln\lambda_i  - \H_i(\b_i)  +  (d_i-1) H_i(b_i), \nonumber
\end{align}
where $d_i$ is degree of variable node $i$ in the factor graph. We refer $F_B(\bn,\bv)$ \eqref{eq_bfe} as the BFE corresponding to the set of fugacities $\{\lambda_i\}_{i=1}^N$. It essentially means that $F_B(\bn,\bv)$ is the BFE corresponding to $p(\x)$ in \eqref{eq_succint}. 

\emph{Remark:}  Since we are only interested in the Bethe optimization problem \eqref{eq_bethe_min}, throughout this paper, we implicitly assume that the factor and variable marginals $(\bn,\bv)$ satisfy the feasibility conditions in \eqref{eq_bethe_min}, \eqref{eq_bn_feasible}, and use the BFE $F_B(\bn,\bv)$ expression obtained in \eqref{eq_bfe}.



\section{Equivalence of Local Gibbsian method and the Bethe approximation} \label{gibbs_bethe}
We first prove certain important structural properties of the BFE, which we then use to establish an equivalence between the local Gibbsian method (Algorithm 1) and the Bethe approximation. In particular, the above equivalence hinges on establishing the following two important properties, which are formalized subsequently in Lemmas \ref{lemma_max_entropy} through \ref{lemma_fug}.
\begin{itemize}
\item At a stationary point of the BFE, the relation between the factor and the variable marginals is captured by the local Gibbsian problem.
\item The factor and variable marginals corresponding to a stationary point of the BFE uniquely determine the fugacities.
\end{itemize}
\noindent \emph{Remark:} The above properties are derived for the Bethe approximation under the SINR model considered in this paper. The Bethe approximation under general settings may not satisfy these properties.
\subsection{Characterization of the stationary points of the BFE}
In this subsection, we characterize the stationary points of the Bethe free energy $F_B(\bn,\bv)$ \eqref{eq_bfe} in Lemmas \ref{lemma_max_entropy} and \ref{lemma_fug}.
The following Lemma asserts that the stationary points of $F_B(\bn,\bv)$ satisfy a maximum entropy property.
\begin{lemma} \label{lemma_max_entropy}
 Let $F_B(\bn,\bv)$ \eqref{eq_bfe} denote the BFE corresponding to the fugacities $\{\lambda_i\}_{i=1}^N$. Let $(\bn^*, \bv^*)$ be a stationary point of $F_B(\bn,\bv)$ that satisfies the feasibility constraints in \eqref{eq_bethe_min}, \eqref{eq_bn_feasible}. Then, for each $i\in \N$, the factor marginal $\b_i^*$ is related to its corresponding variable marginals $\{b_j^*\}_{j \in \N_i}$, through the following constrained entropy maximization problem, parametrized by the variable marginals $\{b_j^*\}_{j \in \N_i}$:
 \begin{align}
&\b_i^*=\arg \max\limits_{\b_i} \H_i(\b_i), \text{     subject to} \label{opt_local_entropy}\\
&\b_i(\x^{(i)}) \geq 0, \; \forall \x^{(i)} \in \I_i; \sum_{\x^{(i)} \in \I_i} \b_i(\x^{(i)})=1,\nonumber\\
 &\sum\limits_{\x^{(i)} \setminus \{x_j \}} \b_i(\x^{(i)})= b_j^*(x_j), \hspace{0.2cm}  \; \forall j \in \N_i,\; x_j =1. \label{eq_dual}
\end{align}
 \end{lemma}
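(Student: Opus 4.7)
The plan is to exploit the fact that, with variable marginals frozen at $\bv=\bv^*$, the Bethe free energy in \eqref{eq_bfe} decouples across the factor marginals $\{\b_i\}$. Concretely, the only term of $F_B(\bn,\bv)$ that depends on a given $\b_i$ is $-\H_i(\b_i)$; the summands $-b_k(1)\ln\lambda_k$ and $(d_k-1)H_k(b_k)$ involve only the variable marginals. Hence a stationary point of the constrained BFE, when we perturb only $\b_i$ along feasible directions, must be a stationary point of $\b_i\mapsto -\H_i(\b_i)$ on the slice of the feasible set carved out by the constraints that involve $\b_i$.

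The next step is to enumerate that slice. From \eqref{eq_bethe_min} together with the support restriction \eqref{eq_bn_feasible}, the constraints involving $\b_i$ are nonnegativity $\b_i(\x^{(i)})\geq 0$, normalization $\sum_{\x^{(i)}}\b_i(\x^{(i)})=1$, the support condition $\b_i(y)=0$ for $y\notin \I_i$, and the marginal consistency $\sum_{\x^{(i)}\setminus\{x_j\}}\b_i(\x^{(i)})=b_j^*(x_j)$ for every $j\in\N_i$ and every $x_j\in\{0,1\}$. The $x_j=0$ case of the marginal constraint is implied by the $x_j=1$ case together with the normalization of $\b_i$ and of $b_j^*$, so it may be dropped; what remains is exactly the feasible set of \eqref{opt_local_entropy}.

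I would then argue that $\b_i^*$ is not merely a stationary point but the unique global maximizer of $\H_i(\b_i)$ on this restricted set. The feasible set is a convex polytope (nonempty, since the true local marginal of any Gibbs distribution lies in it), and $\H_i$ is strictly concave in $\b_i$. Strict concavity on a convex set forces any stationary point to be the unique global maximizer, so it suffices to verify that $\b_i^*$ is a KKT point of \eqref{opt_local_entropy}. This is immediate: the KKT stationarity conditions of the joint BFE problem at $(\bn^*,\bv^*)$, when projected onto the $\b_i$-coordinates, give exactly the KKT conditions of \eqref{opt_local_entropy} with the Lagrange multipliers inherited from the joint problem (those attached to the constraints listed in the previous paragraph).

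The substantive content is really the decoupling observation together with strict concavity of the Shannon entropy; the main obstacle is purely bookkeeping, namely verifying that the $\b_i$-subset of the joint constraint set coincides with the constraint set of \eqref{opt_local_entropy} after removing the redundant $x_j=0$ marginal equalities, and that the support restriction \eqref{eq_bn_feasible} cleanly confines the optimization to distributions on $\I_i$ without introducing boundary effects (nonnegativity on $\I_i$ is already built into the remaining constraints).
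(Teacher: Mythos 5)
Your proposal is correct and rests on the same key observation as the paper's proof: with the variable marginals frozen, the only term of $F_B$ depending on $\b_i$ is $-\H_i(\b_i)$, and the constraints of \eqref{eq_bethe_min} that involve $\b_i$ are exactly those of \eqref{opt_local_entropy}, so stationarity of the BFE in the $\b_i$-directions is stationarity of the local entropy maximization. The only difference is bookkeeping --- the paper eliminates the linear equality constraints by an explicit variable substitution and sets partial derivatives of the reduced unconstrained function to zero, while you keep the constraints and argue via KKT multipliers, adding an explicit appeal to strict concavity to upgrade ``stationary point'' to ``unique global maximizer'' (a step the paper leaves implicit, deferring uniqueness to Lemma~\ref{lemma_libin}).
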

\begin{proof}
Proof is provided in Appendix~\ref{proof_max_entropy}.
\end{proof}
Further, it can be shown \cite[Section 3.5]{libin_book} that there is a unique solution to the maximum entropy problem \eqref{opt_local_entropy}. Specifically, the variable marginals $\{b_j^*\}_{j \in \N_i}$ uniquely characterize the corresponding factor marginal $\b_i^*$, through the local Gibbsian problem \eqref{opt_local_alternate}; this is formalized in Lemma \ref{lemma_libin}.

 \begin{lemma} \label{lemma_libin}
Consider the following local Gibbsian problem defined by the variable marginals $\{b_j^*\}_{j \in \N_i}$:
\begin{align}
v_i = \arg\max\limits_{r \in \R^{\N_i}} \sum\limits_{k \in \N_i} b_k^*(1) r_{k} - \ln \Big(\sum\limits_{y \in \I_i} \exp\Big(\sum\limits_{k \in \N_i} y_k r_{k} \Big)\Big). \label{opt_local_alternate}
\end{align} 
Then the corresponding factor marginal (optimal solution of \eqref{opt_local_entropy}) is given by
\begin{align}
\b_i^*(\x^{(i)})& = \frac{1}{Z_i} \exp\Big(\sum\limits_{k \in \N_i} x_k v_{ik} \Big), & \forall \x^{(i)} \in \I_i, \label{eq_opt_factor_marg}
\end{align}
where $v_i=[v_{ik}]_{k \in \N_i}$ is the solution of \eqref{opt_local_alternate}, $Z_i$ is a normalization constant.
 \end{lemma}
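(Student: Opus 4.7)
The plan is to recognize \eqref{opt_local_entropy} as a strictly concave program over the polytope of local pseudo-marginals, solve it via Lagrangian duality, and identify its Lagrange dual with the local Gibbsian problem \eqref{opt_local_alternate}. Uniqueness of $\b_i^*$ follows immediately from strict concavity of $\H_i$ together with the fact that the constraint set (nonnegativity, normalization over $\I_i$, and marginal consistency with the fixed $\{b_j^*\}_{j\in\N_i}$) is a nonempty convex polytope once the $b_j^*$ are themselves consistent pseudo-marginals of a feasible Bethe collection.

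First I would form the Lagrangian, attaching a multiplier $\mu$ to the normalization and multipliers $v_{ik}$, $k\in\N_i$, to the marginal-consistency constraints \eqref{eq_dual} (one per neighbor, for $x_k=1$; the $x_k=0$ case follows from normalization). Stationarity in $\b_i(\x^{(i)})$ on the open face $\{\b_i(\x^{(i)})>0,\;\x^{(i)}\in\I_i\}$ gives
\begin{align*}
-\ln \b_i(\x^{(i)}) - 1 - \mu - \sum_{k\in\N_i} x_k\, v_{ik} = 0,
\end{align*}
which, after absorbing $e^{-1-\mu}$ into a normalizer $Z_i$, yields precisely the exponential form \eqref{eq_opt_factor_marg}. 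Infeasible local schedules $\x^{(i)}\notin\I_i$ are excluded from the support by the domain of the optimization, consistent with the indicator structure of $f_i$ in \eqref{eq_local_factors}.

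Next I would show that the multipliers $v_i=[v_{ik}]_{k\in\N_i}$ obtained above coincide with the optimizer of \eqref{opt_local_alternate}. Substituting the exponential form into the marginal-consistency constraints \eqref{eq_dual} yields, for every $k\in\N_i$,
\begin{align*}
b_k^*(1) \;=\; \frac{\sum_{y\in\I_i:\,y_k=1}\exp\!\bigl(\sum_{l\in\N_i} y_l v_{il}\bigr)}{\sum_{y\in\I_i}\exp\!\bigl(\sum_{l\in\N_i} y_l v_{il}\bigr)}.
\end{align*}
These are precisely the first-order conditions $\partial/\partial r_k=0$ of the concave objective in \eqref{opt_local_alternate} evaluated at $r=v_i$, so $v_i$ is a stationary point, hence the unique maximizer, of \eqref{opt_local_alternate}. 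Conversely, any optimizer of \eqref{opt_local_alternate} induces via \eqref{eq_opt_factor_marg} a feasible $\b_i$ satisfying the KKT conditions of \eqref{opt_local_entropy}, giving the claimed bijection.

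\textbf{Main obstacle.} The routine Lagrangian calculation is easy; the delicate point is verifying that \eqref{opt_local_alternate} is well-posed and that the Lagrange multipliers exist as a genuine vector in $\R^{\N_i}$ (rather than lying ``at infinity''). This requires that $\{b_j^*(1)\}_{j\in\N_i}$ lies in the relative interior of the local capacity region $\Lambda_i$ defined in \eqref{eq_loc_cap}, so that Slater's condition holds for \eqref{opt_local_entropy} and the supremum in \eqref{opt_local_alternate} is attained. I would argue this using the fact that $(\bn^*,\bv^*)$ is a stationary point of $F_B$ satisfying the Bethe feasibility constraints \eqref{eq_bethe_min}, together with a standard reduction (as in \cite[Section 3.5]{libin_book}) that stationary pseudo-marginals of the BFE automatically satisfy this interiority. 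With that in hand, strong duality between \eqref{opt_local_entropy} and \eqref{opt_local_alternate} closes the argument.
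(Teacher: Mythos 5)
Your proposal is correct and follows essentially the same route as the paper: the paper's proof of this lemma is simply to observe that \eqref{opt_local_alternate} is the Lagrange dual of the constrained entropy maximization \eqref{opt_local_entropy} and to cite \cite[Section 3.5]{libin_book} for the details, which is exactly the duality argument you carry out (exponential form from stationarity, consistency constraints as the first-order conditions of the dual, Slater/interiority for attainment). No further comparison is needed.
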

\begin{proof}
Proof follows by considering the dual problem of \eqref{opt_local_entropy}. The proof can be found in  \cite[Section 3.5]{libin_book}.
\end{proof}

{Next, in Lemma \ref{lemma_fug}, we prove that the factor and variable marginals at a stationary point of the BFE uniquely characterize the fugacities.}
\begin{lemma} \label{lemma_fug}
Let $F_B(\bn,\bv)$ \eqref{eq_bfe} denote the BFE corresponding to the fugacities $\{\lambda_i\}_{i=1}^N$. Let $(\bn^*, \bv^*)$ be a stationary point of the $F_B(\bn,\bv)$. Then the fugacities are uniquely determined by the factor and variables marginals $(\bn^*, \bv^*)$ as follows:
\begin{align}
\lambda_i&= \left(\frac{1-b^*_i(1)}{b_i^*(1)}\right)^{d_i-1}\prod\limits_{j \in \ni} e^{v_{ji}}, \; \; \forall i \in \N, \label{eq_lf_gf}
\end{align}
where $v_{ji}$ is an element of the vector $v_j=[v_{jk}]_{k \in \N_j}$, that characterizes the factor marginal $\b_j^*$ of some $j\in \N_i$.
\end{lemma}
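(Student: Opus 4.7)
The plan is to derive the formula from the KKT conditions of the Bethe optimization \eqref{eq_bethe_min} and combine them with the exponential parametrization of the factor marginals given in Lemma~\ref{lemma_libin}.

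First, I would form the Lagrangian of \eqref{eq_bethe_min}, attaching multipliers $\mu_i$ to the variable-normalization constraints, $\nu_j$ to the factor-normalization constraints, and $\eta_{ji}(x_i)$ to the consistency constraints $b_i(x_i) = \sum_{\x^{(j)} \setminus \{x_i\}} \b_j(\x^{(j)})$ for $j \in \N$, $i \in \N_j$, and $x_i \in \{0,1\}$. By \eqref{eq_bn_feasible} the factor marginals vanish outside $\I_j$, so stationarity need only be enforced on the free coordinates $\{b_i(x_i)\}$ and $\{\b_j(\x^{(j)}) : \x^{(j)} \in \I_j\}$.

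Next, I would write down the two relevant KKT equations. Differentiating the Lagrangian with respect to $\b_j(\x^{(j)})$ for $\x^{(j)} \in \I_j$ and exponentiating yields
\begin{align*}
\b_j^*(\x^{(j)}) = \exp\!\left(\nu_j - 1 + \sum_{i \in \N_j} \eta_{ji}(x_i)\right).
\end{align*}
Writing $\eta_{ji}(x_i) = \eta_{ji}(0) + (\eta_{ji}(1) - \eta_{ji}(0))\,x_i$ and absorbing the $x_i$-independent part into the normalization constant, $\b_j^*$ takes exactly the exponential form \eqref{eq_opt_factor_marg}, and the uniqueness of $v_j$ in Lemma~\ref{lemma_libin} forces $v_{ji} = \eta_{ji}(1) - \eta_{ji}(0)$ for every $i \in \N_j$. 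Differentiating instead with respect to $b_i(x_i)$ and using $d_i = |\N_i|$ (which follows from the symmetry $i \in \N_j \iff j \in \N_i$) gives
\begin{align*}
-\ln\lambda_i\,\mathbf{1}(x_i = 1) - (d_i - 1)\big(\ln b_i^*(x_i) + 1\big) - \mu_i + \sum_{j \in \N_i} \eta_{ji}(x_i) = 0.
\end{align*}

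Finally, I would subtract the $x_i = 0$ instance of this last equation from the $x_i = 1$ instance, eliminating $\mu_i$ together with the additive constants, to obtain
\begin{align*}
\ln \lambda_i = (d_i - 1)\ln \frac{1 - b_i^*(1)}{b_i^*(1)} + \sum_{j \in \N_i} \big(\eta_{ji}(1) - \eta_{ji}(0)\big).
\end{align*}
Substituting the identification $\eta_{ji}(1) - \eta_{ji}(0) = v_{ji}$ and exponentiating then produces \eqref{eq_lf_gf}. Uniqueness of $\lambda_i$ given $(\bn^*,\bv^*)$ is immediate, since $b_i^*(1)$ is read off $\bv^*$ while each $v_{ji}$ is pinned down by the neighbouring variable marginals $\{b_k^*\}_{k \in \N_j}$ through the strictly concave local-Gibbsian problem \eqref{opt_local_alternate}. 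The main obstacle, in my view, is the identification step $v_{ji} = \eta_{ji}(1) - \eta_{ji}(0)$: the individual multipliers $\eta_{ji}(x_i)$ carry a gauge ambiguity that is absorbed into $\nu_j$, so the argument must carefully appeal to Lemma~\ref{lemma_libin} to recognize this particular linear combination of multipliers as precisely the local-fugacity entry appearing in the target formula.
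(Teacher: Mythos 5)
Your derivation is correct, but it reaches the stationarity conditions by a different route than the paper. You keep the Bethe optimization in constrained form and work with the KKT system, attaching multipliers $\eta_{ji}(x_i)$ to the consistency constraints; the formula then drops out of the difference of the $x_i=1$ and $x_i=0$ stationarity equations for $b_i$, once the multiplier differences $\eta_{ji}(1)-\eta_{ji}(0)$ are recognized as the local fugacities $v_{ji}$. The paper instead eliminates all equality constraints by an explicit reparametrization (Table~\ref{table_factor}): it replaces $b_i(1)$ by $y_i$ and the ``multiply-active'' entries of $\b_j$ by auxiliary vectors $\z_j$, differentiates the resulting unconstrained $F_B$ with respect to $y_i$, and exploits the fact that under this parametrization only the two entries $\b_j(\mathbf{0})$ and $\b_j(e^{(i)})$ of each neighbouring factor marginal depend on $y_i$; this yields $\lambda_i = \bigl(b_i^*(0)/b_i^*(1)\bigr)^{d_i-1}\prod_{j\in\ni}\b_j^*(e^{(i)})/\b_j^*(\mathbf{0})$, and Lemma~\ref{lemma_libin} converts each ratio into $e^{v_{ji}}$. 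The two arguments are dual views of the same first-order conditions: yours is the standard Yedidia-style Lagrangian computation (cleaner bookkeeping, but it introduces the multiplier-identification step), while the paper's avoids multipliers at the cost of the substitution machinery. On the one point you flag as delicate: the gauge ambiguity in $\eta_{ji}(x_i)$ is harmless, because the final formula only needs the ratio $\b_j^*(e^{(i)})/\b_j^*(\mathbf{0})$, which is determined by the distribution $\b_j^*$ itself rather than by any particular parametrization; since your KKT form and the Lemma~\ref{lemma_libin} form are both exponential-family distributions on $\I_j$ matching the same marginals $\{b_k^*(1)\}_{k\in\N_j}$, the uniqueness cited from \cite{libin_book} makes them equal as distributions, and hence $e^{\eta_{ji}(1)-\eta_{ji}(0)}=e^{v_{ji}}$ as required. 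You should also note explicitly, as the paper does implicitly via $d_i=|\N_i|$, that the symmetry $i\in\N_j\iff j\in\N_i$ is what lets the sum over factors adjacent to $i$ be written as a product over $j\in\ni$, and that strict positivity of the stationary marginals (needed to take logarithms) holds by the argument cited in \cite[Remark 4.1]{book_martin}.
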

\begin{proof}
Proof is provided in Appendix \ref{proof_fugacities}.
\end{proof}
In Lemmas \ref{lemma_max_entropy}, \ref{lemma_fug} we have derived the necessary conditions that a stationary point of the BFE should satisfy. In Appendix \ref{proof_complete}, we show that the conditions in these two Lemmas together constitute a sufficient condition for a stationary point of the BFE.
\subsection{Local Gibbsian method gives the Bethe approximated fugacities}
We are now in a position to state a key result of this paper, which asserts that the approximated global fugacities \eqref{eq_lf_gf0} obtained by solving the local Gibbsian problems \eqref{opt_local_alternate0} correspond exactly to the Bethe approximated fugacities. We formalize this result through Definition \ref{def_bethe_fug}, Theorem \ref{thm_bethe_main}.
 
\begin{definition} (Bethe approximated fugacities) \label{def_bethe_fug}
For a given set of service rates $\{s_i\}_{i=1}^N$, a set of fugacities $\{\lambda_i\}_{i=1}^N$ are said to be Bethe approximated fugacities if the following holds:
``Consider the BFE $F_B(\bn,\bv)$ corresponding to the fugacities $\{\lambda_i\}_{i=1}^N$. Then there should exist a stationary point $(\bn^*,\bv^*)$ point of the BFE such that the corresponding variable marginals are equal to the given service rates, \ie, $b_i^*(1)=s_i, \forall i$".
\end{definition}
\begin{theorem} \label{thm_bethe_main} 
Let $\{s_i\}_{i=1}^N$ be the desired service rates.
\begin{itemize} 
\item[1.]  Then the global fugacities $\{\lt_i\}_{i=1}^N$ estimated in \eqref{eq_lf_gf0} are the Bethe approximated fugacities for $\{s_i\}_{i=1}^N$. 
\item[2.] For a given set of service rates, the Bethe approximated fugacities are unique. 
\end{itemize}
\end{theorem}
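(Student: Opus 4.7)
The plan is to prove the two parts separately, leveraging the structural characterization of BFE stationary points built up in Lemmas \ref{lemma_max_entropy}--\ref{lemma_fug}. The core idea for part 1 is constructive: starting from the local fugacities $\{\lf_j\}_{j=1}^N$ produced by Algorithm 1, I will explicitly manufacture a candidate stationary point $(\bn^*,\bv^*)$ of the BFE and then verify that (i) its variable marginals equal the desired $\{s_i\}$, and (ii) the fugacities that Lemma \ref{lemma_fug} assigns to this stationary point are precisely the $\{\lt_i\}$ given by \eqref{eq_lf_gf0}.

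For part 1, the construction proceeds as follows. First, set $b_i^*(1)=s_i$ and $b_i^*(0)=1-s_i$ for every link $i$; this is forced by the requirement in Definition \ref{def_bethe_fug}. Next, for each link $j$, use the local Gibbsian solution $\lf_j=[\lf_{jk}]_{k\in\nj}$ of \eqref{opt_local_alternate0} to define
\begin{align*}
\b_j^*(\x^{(j)}) = \frac{1}{Z_j}\exp\Big(\sum_{k\in\nj} x_k \lf_{jk}\Big), \qquad \x^{(j)} \in \I_j,
\end{align*}
and zero outside $\I_j$. I would then verify the feasibility constraints of \eqref{eq_bethe_min}: the first-order condition of \eqref{opt_local_alternate0} gives exactly $\sum_{\x^{(j)}\setminus x_k} \b_j^*(\x^{(j)}) = s_k$ for every $k\in\nj$, which matches $b_k^*(1)=s_k$, so the marginal-consistency constraints hold by construction. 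Since the constructed $(\bn^*,\bv^*)$ satisfies Lemma \ref{lemma_libin} (by design) with $v_{jk}=\lf_{jk}$, the remark following Lemma \ref{lemma_fug} tells me it is a stationary point of $F_B$ as soon as I identify the corresponding fugacities via \eqref{eq_lf_gf}. Substituting $b_i^*(1)=s_i$ and $v_{ji}=\lf_{ji}$ into \eqref{eq_lf_gf} yields
\begin{align*}
\lambda_i = \Big(\tfrac{1-s_i}{s_i}\Big)^{|\ni|-1}\prod_{j\in\ni} e^{\lf_{ji}},
\end{align*}
which is exactly $\lt_i$ as defined in \eqref{eq_lf_gf0} (recalling $d_i=|\ni|$ in this factor graph). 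Thus $(\bn^*,\bv^*)$ is a stationary point of the BFE corresponding to the fugacities $\{\lt_i\}$ and has variable marginals equal to $\{s_i\}$, proving part 1.

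For part 2, I would argue uniqueness by a backward chain through the same three lemmas. Suppose two fugacity vectors $\{\lambda_i\}$ and $\{\lambda_i'\}$ are both Bethe approximated fugacities for the same $\{s_i\}$. By Definition \ref{def_bethe_fug}, each admits a BFE stationary point whose variable marginals are $b_i^*(1)=s_i$. The variable marginals are therefore identical (and fully pinned down on $\{0,1\}$). By Lemma \ref{lemma_max_entropy}, at any such stationary point each factor marginal $\b_i^*$ is the solution of the maximum-entropy problem \eqref{opt_local_entropy} parameterized solely by $\{s_j\}_{j\in\ni}$; by Lemma \ref{lemma_libin} this solution is unique and has the explicit exponential form with exponents $v_i$ determined by $\{s_j\}_{j\in\ni}$. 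Hence the full stationary point $(\bn^*,\bv^*)$ and the exponent vectors $\{v_j\}$ coincide for both fugacity vectors. Finally, Lemma \ref{lemma_fug} expresses the fugacities as an explicit function of $(\bn^*,\bv^*)$ and $\{v_j\}$, so $\lambda_i=\lambda_i'$ for all $i$.

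The main obstacle I anticipate is the careful bookkeeping in part 1 of checking that the marginal-consistency constraint $\sum_{\x^{(j)}\setminus x_k}\b_j^*(\x^{(j)}) = b_k^*(1)$ really follows from the first-order optimality of the local Gibbsian problem \eqref{opt_local_alternate0}; this is the same computation used in the remark justifying that $G(r)$ solves \eqref{eq_serv_fug}, but applied locally at every $j$ with $\I_j$ in place of $\I$ and with the target marginal $s_k$ for every $k\in\nj$. Once this consistency is in hand, the algebraic identification of $\lt_i$ with the Lemma~\ref{lemma_fug} fugacity is a direct substitution, and uniqueness follows almost formally from the chain of lemmas.
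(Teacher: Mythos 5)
Your proposal is correct and follows essentially the same route as the paper: you construct the same candidate stationary point (variable marginals set to $s_i$, factor marginals in exponential form with exponents given by the local Gibbsian solutions $\lf_{jk}$), invoke Lemmas \ref{lemma_max_entropy}--\ref{lemma_fug} plus their joint sufficiency to certify stationarity, and read off $\lt_i$ from \eqref{eq_lf_gf}; your uniqueness argument is the same backward chain through the lemmas. The only cosmetic difference is that you make explicit the verification that the first-order conditions of \eqref{opt_local_alternate0} yield the marginal-consistency constraints, which the paper leaves implicit inside its appeal to Lemma \ref{lemma_libin}.
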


\begin{proof}
Consider the BFE \eqref{eq_bfe_new} corresponding to the fugacities $\{\lt_i\}_{i=1}^N$. We show there exists a stationary point $(\{\underline{\b}_j\}, \{\underline{b}_j\})$ of the BFE, given by \eqref{eq_thm1}-\eqref{eq_thm2}, such that the corresponding variable marginals $\{\underline{b}_j(1)\}$ \eqref{eq_thm1} are equal to the desired service rates $\{s_i\}_{i=1}^N$.
\begin{align}
F_B(\bn,\bv)&= \sum\limits_{i=1}^N   -b_i(1) \ln\lt_i  - \H_i(\b_i)  +  (d_i-1) H_i(b_i), \label{eq_bfe_new} \\
\underline{b}_j(1)&=s_j, \; \underline{b}_j(0)=1-s_j, \; \; \forall j, \label{eq_thm1}\\
 \underline{\b}_j(\x^{(j)})& = \frac{1}{Z_j} \exp\Big(\sum\limits_{k \in \N_j} x_k \lf_{jk} \Big), \; \; \forall \x^{(j)} \in \I_j, \; \forall j, \label{eq_thm2}
\end{align}
where $\lf_j=[\lf_{jk}]_{k \in \N_j}$ is the local fugacity vector at link $j$ obtained from \eqref{opt_local_alternate0}, and $Z_j$ is the corresponding normalization constant. 

Now we use the structural properties of the BFE derived in Lemmas 2 through 4 to complete the proof. In particular, recall from \eqref{opt_local_alternate0} that the local fugacity vector $\lf_j=[\lf_{jk}]_{k \in \N_j}$ is obtained by solving the local Gibbsian problem \eqref{opt_local_alternate0}, which is same as \eqref{opt_local_alternate}. Hence, due to Lemma \ref{lemma_libin}, it is clear that the set of marginals $(\{\underline{\b}_j\}, \{\underline{b}_j\})$ in \eqref{eq_thm1}-\eqref{eq_thm2} satisfy the maximum entropy property stated in Lemma \ref{lemma_max_entropy}. 

Next, due to the definition \eqref{eq_lf_gf0} of the global fugacities $\{\lt_i\}_{i=1}^N$, the variable and factor marginals defined in \eqref{eq_thm1}-\eqref{eq_thm2} immediately satisfy the condition \eqref{eq_lf_gf} in Lemma \ref{lemma_fug}. 

Therefore, $(\{\underline{\b}_j\}, \{\underline{b}_j\})$ given in \eqref{eq_thm1}-\eqref{eq_thm2} is a stationary point of the BFE \eqref{eq_bfe_new}. Further, note that the corresponding variable marginals $\{\underline{b}_j(1)\}$ \eqref{eq_thm1} at this stationary point are equal to the desired service rates $\{s_i\}_{i=1}^N$. Hence, the fugacities $\{\lt_i\}_{i=1}^N$ are the Bethe approximated fugacities  for $\{s_i\}_{i=1}^N$.

Next, we prove the uniqueness of the Bethe approximated fugacities. For the given service rates $\{s_i\}_{i=1}^N$, let us assume that two sets of fugacities $\{\lambda_i^1\}_{i=1}^N$, $\{\lambda_i^2\}_{i=1}^N$ satisfy the definition of the Bethe approximated fugacities. Let $F_B^1(\bn,\bv)$, $F_B^2(\bn,\bv)$ be the BFE functions corresponding to the  fugacities $\{\lambda_i^1\}_{i=1}^N$, $\{\lambda_i^2\}_{i=1}^N$ respectively. Then there should exist stationary points $(\{\b_j^1\},\{b_j^1\} )$, $(\{\b_j^2\},\{b_j^2\} )$ of their corresponding BFE functions such that the variables marginals are equal to the service rates, \ie, $\{b_j^1(1)\}=\{b_j^2(1)\}=\{s_j\}$. Since the variable marginals are same, due to Lemmas \ref{lemma_max_entropy}, \ref{lemma_libin}, the factor marginals corresponding to these stationary points should be the same. In other words,  \ie,  $(\{\b_j^1\},\{b_j^1\} )= (\{\b_j^2\},\{b_j^2\} )$. Then \eqref{eq_lf_gf} from Lemma \ref{lemma_fug} asserts that $\{\lambda_i^1\}_{i=1}^N = \{\lambda_i^2\}_{i=1}^N$. Hence, for a given set of service rates, the Bethe approximated fugacities are unique.
\end{proof}

\section{Special case - Conflict graph model} \label{special_cases}
Conflict graph model \cite{bethe_jshin} is a special case of the SINR interference model. In the conflict graph model, two links cannot transmit simultaneously, if one link is within the interference range of the other link. Under the conflict graph model, we derive simple closed form expressions for the local fugacities  \eqref{opt_local_alternate0}.
\begin{theorem} \label{thm_lf_cg}
 For the conflict graph model, the local fugacities \ie, solution of the local Gibbsian problem \eqref{opt_local_alternate0} at a link $i$ is given by
\begin{align}
e^{\lf_{ij}}&= \begin{cases} \label{eq_lf_cg}
s_i  \left(1-s_i\right)^{|\N_i|-2} \prod\limits_{k \in \N_i \setminus \{i\}}(1-s_i-s_k)^{-1},\; \text{if } j=i, \\
s_j(1-s_i-s_j)^{-1}, \; \; \; \; \; \text{if } j \in \N_i \setminus \{i\}.
\end{cases}
\end{align}
\end{theorem}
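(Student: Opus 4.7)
The plan is to exploit the simple combinatorial structure of the local feasibility set $\I_i$ under the conflict graph model, write down the first-order stationarity conditions for the (strictly concave) local Gibbsian problem~\eqref{opt_local_alternate0}, and invert them in closed form.

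First I would describe $\I_i$ explicitly. In the conflict graph model, a local schedule $y\in\{0,1\}^{\N_i}$ is locally feasible at $i$ iff either $y_i=0$ (any pattern on $\N_i\setminus\{i\}$ is allowed) or $y_i=1$ and $y_k=0$ for every $k\in\N_i\setminus\{i\}$. This decomposition immediately yields the log-partition function in closed form:
\begin{align*}
Z_i(r) \;=\; \sum_{y\in\I_i}\exp\!\Big(\sum_{k\in\N_i}y_k r_k\Big) \;=\; \prod_{k\in\N_i\setminus\{i\}}\bigl(1+e^{r_k}\bigr) \;+\; e^{r_i}.
\end{align*}
The local objective $F(r)=\sum_{k\in\N_i}s_k r_k-\ln Z_i(r)$ is strictly concave (as a difference of a linear function and a log-sum-exp), so its unique maximizer $\lf_i$ is characterized by $\nabla F(\lf_i)=0$.

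Next I would compute the two types of first-order conditions. Writing $P(r):=\prod_{k\in\N_i\setminus\{i\}}(1+e^{r_k})$, the condition $\partial F/\partial r_i=0$ gives $s_i=e^{r_i}/Z_i(r)$, and for $j\in\N_i\setminus\{i\}$ the condition $\partial F/\partial r_j=0$ gives
\begin{align*}
s_j \;=\; \frac{e^{r_j}}{1+e^{r_j}}\cdot\frac{P(r)}{Z_i(r)} \;=\; \frac{e^{r_j}}{1+e^{r_j}}\,(1-s_i),
\end{align*}
where I used $P(r)/Z_i(r)=1-s_i$ from the first condition. Solving the latter yields $e^{\lf_{ij}}=s_j/(1-s_i-s_j)$, which is the second branch of~\eqref{eq_lf_cg}.

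Finally, to get the $j=i$ branch, I would plug the neighbour expressions back into $P$: each factor becomes $1+e^{\lf_{ij}}=(1-s_i)/(1-s_i-s_j)$, so
\begin{align*}
P(\lf_i) \;=\; (1-s_i)^{|\N_i|-1}\prod_{k\in\N_i\setminus\{i\}}\bigl(1-s_i-s_k\bigr)^{-1}.
\end{align*}
Combining with $e^{\lf_{ii}}=P(\lf_i)\,s_i/(1-s_i)$ (rearranged from $s_i=e^{r_i}/(P+e^{r_i})$) produces the first branch of~\eqref{eq_lf_cg}. There is no real obstacle here beyond bookkeeping; the only subtlety to flag is verifying that $1-s_i-s_j>0$ for neighbouring links, which is automatic since $(s_i,s_j)$ lie in the rate region of the conflict graph model, guaranteeing well-definedness of the closed form.
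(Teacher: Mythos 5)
Your proposal is correct and follows essentially the same route as the paper: both reduce the local Gibbsian problem to the marginal-matching conditions $s_i=e^{r_i}/Z_i$ and $s_j=\frac{e^{r_j}}{1+e^{r_j}}\frac{P}{Z_i}$ with $Z_i=e^{r_i}+\prod_{k\in\N_i\setminus\{i\}}(1+e^{r_k})$, exploiting the same combinatorial description of $\I_i$. The only difference is direction of argument --- you solve the system forward to derive the closed form, whereas the paper verifies the stated formulas by substitution --- which is a matter of presentation, not substance.
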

\begin{proof}
Proof is provided in Appendix \ref{proof_lf_cg}. 
\end{proof}
\noindent \emph{Comparison with the results in \cite{bethe_jshin}:}\\
The fugacities for the conflict graph model have been derived in \cite{bethe_jshin}. Here, we derive the global fugacities using our approach and compare to the results in \cite{bethe_jshin}. 
\begin{cor}
 \label{cor_cg}
 For the conflict graph model, the global fugacity at a link $i$ is
\begin{align*}
\tilde{\lambda}_i&= \frac{s_i (1-s_i)^{2|\N_i|-3}}{\prod\limits_{k \in \N_i \setminus \{i\}} (1-s_i-s_k)^2}.
\end{align*}
\end{cor}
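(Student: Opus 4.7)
The plan is to prove Corollary 1 by direct substitution into the formula \eqref{eq_lf_gf0} that converts local fugacities into the approximate global fugacity. By Theorem \ref{thm_bethe_main}, this approximate global fugacity is exactly the Bethe approximated fugacity, so no additional approximation step is needed; the task reduces to a short algebraic manipulation.

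First I would recall \eqref{eq_lf_gf0}, which states
\begin{align*}
\tilde{\lambda}_i = \left(\frac{1-s_i}{s_i}\right)^{|\N_i|-1}\prod_{k \in \N_i} e^{\lf_{ki}},
\end{align*}
so I need the quantities $e^{\lf_{ki}}$ for every $k \in \N_i$. These are supplied by Theorem \ref{thm_lf_cg} applied at each neighbour $k$ of $i$: since $i \in \N_k$ for every such $k$, the second line of \eqref{eq_lf_cg} gives, for $k \in \N_i \setminus \{i\}$,
\begin{align*}
e^{\lf_{ki}} = \frac{s_i}{1-s_i-s_k},
\end{align*}
while the first line of \eqref{eq_lf_cg} (with $j=i$) gives the diagonal term
\begin{align*}
e^{\lf_{ii}} = s_i\,(1-s_i)^{|\N_i|-2}\prod_{k \in \N_i \setminus \{i\}} (1-s_i-s_k)^{-1}.
\end{align*}

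Next I would multiply these together. Letting $d=|\N_i|$ and collecting the $s_i$ factors (one from $e^{\lf_{ii}}$ and one from each of the $d-1$ off-diagonal terms) yields
\begin{align*}
\prod_{k \in \N_i} e^{\lf_{ki}} = s_i^{\,d}\,(1-s_i)^{d-2}\prod_{k \in \N_i \setminus \{i\}} (1-s_i-s_k)^{-2},
\end{align*}
where the exponent $-2$ on each $(1-s_i-s_k)$ arises because this factor appears once from the diagonal term $e^{\lf_{ii}}$ and once from the off-diagonal term $e^{\lf_{ki}}$.

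Finally, substituting back into \eqref{eq_lf_gf0}, the factor $(1-s_i)^{d-1}/s_i^{d-1}$ combines with $s_i^{\,d}(1-s_i)^{d-2}$ to produce $s_i(1-s_i)^{2d-3}$, which delivers the claimed expression
\begin{align*}
\tilde{\lambda}_i = \frac{s_i\,(1-s_i)^{2|\N_i|-3}}{\prod_{k \in \N_i \setminus \{i\}}(1-s_i-s_k)^2}.
\end{align*}
There is essentially no obstacle here; the only point requiring care is bookkeeping of the exponents of $s_i$ and $(1-s_i-s_k)$, in particular recognising that each off-diagonal factor $(1-s_i-s_k)$ contributes twice (once from the self-term $e^{\lf_{ii}}$ and once from $e^{\lf_{ki}}$), which is what produces the square in the denominator.
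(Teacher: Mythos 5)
Your proof is correct and is exactly the paper's argument: the paper's proof of Corollary~\ref{cor_cg} is a one-line statement that the result follows by substituting the local fugacities from \eqref{eq_lf_cg} into \eqref{eq_lf_gf0}, and you have simply carried out that substitution with the exponent bookkeeping made explicit. The algebra (including the symmetry observation $i \in \N_k \Leftrightarrow k \in \N_i$ and the doubled factor $(1-s_i-s_k)^{-2}$) checks out.
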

\begin{proof}
We obtain this result by substituting the local fugacities obtained in \eqref{eq_lf_cg} in the expression proposed for global fugacities \eqref{eq_lf_gf0}.
\end{proof}
The expression for the global fugacities proposed in \cite{bethe_jshin} is
\begin{align*}
\tilde{\lambda}_i&= \frac{s_i (1-s_i)^{|\N_i|-2}}{\prod\limits_{k \in \N_i \setminus \{i\}} (1-s_i-s_k)},
\end{align*}
which is different from the expression derived in Corollary \ref{cor_cg}. In essence, these expressions are different because, the factorizations considered in the two cases are different. Specifically, it is well known that for a given product form distribution, the Bethe approximation technique could lead to different results, corresponding to different factorizations of the product form distribution \cite[Chapter 2, Page 10]{book_martin}.

%

To elucidate this point, we consider the conflict graph interference model, and present two natural factorizations that lead to the same CSMA distribution  \eqref{eq_dist}. 
%
In the conflict graph model, a schedule $\x$ is said to be feasible if it is an independent set of the underlying graph $G(V,E)$. Using this observation, the term $\i(\x \text{ is feasible})$ in \eqref{eq_dist} can be factorized in two ways as given below.\\
\emph{1. Edge-centric factorization:} This factorization ensures that for each edge, not more than one of its end vertices are active.
\begin{align*}
\i(\x \text{ is feasible})&= \prod_{(i,j) \in E} \i(x_i x_j =0).
\end{align*}

\noindent \emph{2. Vertex-centric factorization:} In this factorization, whenever a vertex is active, it ensures that all its neighbours are inactive.
\begin{align*}
\i(\x \text{ is feasible})&= \prod_{j \in V} f_j({\x^{(j)}}),
\end{align*}
where $\x^{(j)}$ is the set of activation status of node $j$ and its neighbours, and $f_j$ is given by
\begin{align*}
f_j(\x^{(j)})= \begin{cases}
 1,  & \text{if } x_j = 0,\\
1, & \text{if }  x_j=1 \text{ and } \{x_k\}_{k \in \N_j \setminus \{j\}}=0, \\
  0, &  \text{otherwise}.
\end{cases}
\end{align*}
In \cite{bethe_jshin}, the authors use the edge-centric factorization, which cannot be directly used to capture the SINR model. Hence we use a more general factorization \eqref{eq_local_factors}, which reduces to the vertex-centric factorization for the conflict graph case. 

\emph{Remark:} 
The existing theory on the Bethe approximation is not sufficient to conclude or prove that one of the factorizations is always better than the other. However, the following can be said under some special cases.

\begin{itemize}
\item For the conflict graph model, the formula derived in \cite{bethe_jshin} is provably exact if the conflict graph is a tree. Hence, if there is prior knowledge that the conflict graph has tree topology, the formula in \cite{bethe_jshin} should be used. 

\item From the simulations that we have conducted, we have the following observation for topologies with loops. For the formula proposed in \cite{bethe_jshin}, we observed that the achieved service rate is less than the target service rate even for small load. However, for the formula obtained using our approach, the achieved service rate is more than the target service rate at smaller loads. Hence, if we have prior knowledge that we are operating in the low load regime, empirical evidence suggests that it is beneficial to use the approach proposed in our paper.
\end{itemize}

\section{Utility Maximization} \label{util_max}
In this section, we consider the utility maximization problem and provide an approximation algorithm to solve the problem in a distributed manner. The problem is defined as follows. Suppose each link $i$ in the network is associated with a concave utility function of its service rate $U_i : [0,1] \rightarrow \R_+$. Our objective is to find the service rates that maximize the system wide utility, \ie, 
\begin{align}
\max_{y \in \Lambda} \sum_{i=1}^N U_i(y_i), \label{eq_util_main}
\end{align}
and subsequently compute the global fugacities that correspond to these optimal service rates.

In \cite{libin}, an iterative algorithm to update the global fugacities is proposed. However, each iteration of the algorithm requires an underlying slowly mixing Markov chain to reach steady state. Hence it suffers from impractically slow convergence to the optimal fugacities. To address this issue, we propose an iterative algorithm which updates the local fugacities instead of directly updating the global fugacities. These local fugacitiy updates are computationally simple and do not require any Markov chain to convergence. These local fugacities are then used to obtain the approximate global fugacities using \eqref{eq_lf_gf0} proposed in Section \ref{local_gibbs}. 

Each link in the network executes \emph{Algorithm 2} in parallel. The algorithm involves solving an one dimensional optimization problem \eqref{eq_one_dim} related to the original optimization problem \eqref{eq_util_main}. Here $\theta >0,$ is a parameter of the algorithm that can be tuned. The solution of this optimization problem $s_j(t)$ is used in the subsequent steps of the algorithm to update the local fugacities \eqref{eq_update} and the global fugacities. We will later show that the update equation \eqref{eq_update} is inspired by a subgradient descent algorithm for a related optimization problem. The term $\alpha(t)$ in  \eqref{eq_update} corresponds to the step-size of the subgradient descent algorithm. Any standard step-sizes that satisfy the convergence criteria for a subgradient descent method can be used \cite[Chapter 2]{book_subgradient}. A typical example is $\alpha(t) = \frac{1}{t}$.

\noindent\rule[0.5ex]{\linewidth}{0.5pt}
\textbf{Algorithm 2: Local utility maximization at link $j$}\\
\noindent\rule[0.5ex]{\linewidth}{0.5pt}
\begin{enumerate}
\item At $t=0$, initialize $\beta_{jk}(t)=0,  \; \forall k \in \nj$.\\

\emph{Computing global fugacities:}\\

\item  From each neighbour $k \in \N_j$, obtain the local fugacity $\lf_{kj}(t)$. 

\item Compute $s_j(t)$ as
\begin{align}
s_j(t)= \arg \max_{q \in [0,1]} \theta U_j(q) - q \sum_{k \in \nj} \beta_{kj}(t). \label{eq_one_dim}
\end{align}

\item Compute the approximate global fugacity $\lt_j(t)$ from local fugacities $\left( \lf_{kj}(t), \; k \in \nj \right)$ and $s_j(t)$ using \eqref{eq_lf_gf0}.\\

\emph{Updating the local fugacities:}\\
\item Consider the distribution
\begin{align*}
\b_j(\x^{(j)};t) =  {Z_j^{-1}} \exp \Big(\sum_{k \in \N_j} x_k \beta_{jk}(t) \Big),  \forall \x^{(j)} \in \I_j,
\end{align*}
  and for each $k \in \N_j$, let $m_{jk}(t):=\sum_{\x^{(j)} : x_k=1} \b_j(\x^{(j)};t)$ represent the marginal probability corresponding to $x_k$ under this distribution. 
\item For each $k \in \nj$, obtain $s_k(t)$ computed at the neighbour $k$, and update the local fugacity at $j$ using
\begin{align}
\beta_{jk}(t+1)= \beta_{jk}(t) + \alpha(t)\left(s_k(t)-m_{jk}(t)\right). \label{eq_update}
\end{align}
\end{enumerate}
\noindent\rule[0.5ex]{\linewidth}{0.5pt}

\emph{Complexity:}
The computational complexity of the above algorithm is $O(2^{|\N_j|})$, which depends only on the size of the local neighbourhood, and is independent of the total size of the network.

\emph{Accuracy:}
The accuracy of the above algorithm can be understood by splitting the error into two parts. The first part is the error involved in estimating the optimal service rates in \eqref{eq_util_main}. The second part is the error involved in estimating the global fugacities corresponding to these service rates. We characterize the first part of the error in Theorem \ref{thm_util}, which states that the gap to the optimum utility is $O(\frac{1}{\theta})$.  In other words, the local algorithm provides a good estimate of the optimal service rates when $\theta$ is large. The second part of the error depends only on the accuracy of the Bethe approximation. The Bethe error is evidenced to be reasonably small in many applications \cite{bethe_emprical}.

\begin{theorem} \label{thm_util}
In Algorithm 2, the service rates $s(t)=[s_j(t)]_{j=1}^N$ \eqref{eq_one_dim} converge to some $\underline{s}=[\underline{s}_j]_{j=1}^N \in [0,1]^{\N}$, such that the limit $\underline{s}$ satisfies
\begin{align}
\sum_{j=1}^N U_j(\underline{s}_j) \geq \max_{y \in \Lambda} &\sum_{j=1}^N U_j(y_j) - \frac{\sum_j \log |\mathcal{I}_j|}{\theta}. \label{eq_thm_util}
\end{align}
\end{theorem}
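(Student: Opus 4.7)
The plan is to interpret Algorithm 2 as a Lagrangian decomposition subgradient method for the entropy-regularized relaxation of \eqref{eq_util_main} given by
\begin{align*}
(\mathrm{P}):\ \max_{s,\,\b}\ & \theta \sum_{i=1}^N U_i(s_i) + \sum_{j=1}^N \H_j(\b_j) \\
\text{s.t.}\ & \b_j(\x^{(j)}) \ge 0,\ \sum_{\x^{(j)}\in \I_j} \b_j(\x^{(j)}) = 1,\ \forall j,\\
 & s_k = \sum_{\x^{(j)}:\, x_k=1} \b_j(\x^{(j)}),\ \forall j,\ k\in\N_j,\\
 & s_i \in [0,1],\ \forall i.
\end{align*}
The key observation is that the $s$-feasible region of $(\mathrm{P})$ is a relaxation of $\Lambda$: for any $y\in\Lambda$ realized by some joint distribution on $\I$, taking its true local marginals produces a feasible tuple $(s=y,\b)$, and since each $\H_j(\b_j)\ge 0$, this already gives
\begin{align*}
\mathrm{opt}(\mathrm{P}) \ \ge\ \theta\, \max_{y\in\Lambda}\sum_{i=1}^N U_i(y_i).
\end{align*}

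\textbf{Dual decomposition recovers Algorithm 2.} Attaching multipliers $\lf_{jk}$ to the local consistency constraints, the Lagrangian decouples over variable blocks $s_i$ and factor blocks $\b_j$. The $s$-block reduces to the $N$ independent scalar problems in \eqref{eq_one_dim}, while each factor-node block is an entropy maximization with linear perturbation whose unique maximizer is precisely the Gibbs-form distribution $\b_j(\x^{(j)};t) \propto \exp(\sum_{k\in\N_j} x_k\lf_{jk}(t))$ of Step 5. A direct computation then shows that the subgradient of the dual $D(\lf) = \max_{s,\b} L(s,\b;\lf)$ along $\lf_{jk}$ equals $m_{jk}(t)-s_k(t)$, so the update \eqref{eq_update} is precisely a dual subgradient step on $D$ with stepsize $\alpha(t)$.

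\textbf{Convergence and the gap bound.} Under admissible stepsizes (the standard conditions $\sum_t\alpha(t)=\infty,\ \sum_t\alpha(t)^2<\infty$), classical results for dual subgradient methods on convex composite problems guarantee convergence of $\lf(t)$ to a dual optimum $\underline{\lf}$. Because each $\H_j$ is strictly concave and $\I_j$ is a fixed finite set, the inner maximizer $\b_j(\cdot;\lf)$ is continuous in $\lf$, and the primal iterates recovered via \eqref{eq_one_dim} and Step 5 converge to a primal optimum $(\underline{s},\underline{\b})$ of $(\mathrm{P})$. At this optimum
\begin{align*}
\theta \sum_{j=1}^N U_j(\underline{s}_j) + \sum_{j=1}^N \H_j(\underline{\b}_j) = \mathrm{opt}(\mathrm{P}).
\end{align*}
Combining with the inequality above and the uniform bound $\H_j(\underline{\b}_j) \le \log |\I_j|$ (maximum entropy of a distribution supported on $\I_j$) yields
\begin{align*}
\theta \sum_{j=1}^N U_j(\underline{s}_j)\ \ge\ \theta \max_{y\in\Lambda}\sum_{j=1}^N U_j(y_j)\, -\, \sum_{j=1}^N \log|\I_j|,
\end{align*}
and dividing by $\theta$ produces \eqref{eq_thm_util}.

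\textbf{Main obstacle.} The delicate point is the pointwise (rather than merely averaged or subsequential) convergence of $s(t)$ to $\underline{s}$: generic dual subgradient methods only deliver ergodic convergence, so the argument must exploit the strict concavity provided by $\sum_j\H_j$ and the continuity of the inner maximizers in $\lf$ to pin down a unique primal limit. The fact that $U_j$ is only concave (not strictly so) introduces a further subtlety in \eqref{eq_one_dim}, which may require either a primal-averaging variant of Algorithm 2 or an appeal to the uniqueness of the Bethe-style primal optimum implied by the strict concavity of the factor-node entropies. Once pointwise primal convergence is established, the gap bound is an immediate consequence of weak duality and the uniform entropy estimate, and requires no further hypotheses on the utilities beyond concavity.
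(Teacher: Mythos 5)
Your proposal is correct and follows essentially the same route as the paper: your auxiliary problem $(\mathrm{P})$ is exactly the paper's entropy-regularized Bethe relaxation \eqref{eq_util_bethe}, your dual-decomposition step is the paper's Lemma~\ref{lemma_sub_grad} (proved via the partial Lagrangian and Slater's condition), and your final chain of inequalities using $\Lambda \subseteq \Lambda_B$ and $\H_j(\underline{\b}_j) \le \log|\I_j|$ is the paper's Step~3. Your closing remark about pointwise versus ergodic primal convergence identifies a genuine subtlety that the paper itself passes over without detailed argument, so no gap is introduced relative to the published proof.
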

\begin{proof}
We prove this theorem in the following steps:
\begin{itemize}
\item[1.] First, we define a new optimization problem \eqref{eq_util_bethe} that is related to the original utility maximization problem \eqref{eq_util_main}.
\item[2.] Then, we prove that the proposed local utility maximization algorithm corresponds to a subgradient descent algorithm for the new optimization problem \eqref{eq_util_bethe}.
\item[3.] Finally, we complete the proof by showing that the solution of the new optimization problem \eqref{eq_util_bethe} satisfies the inequality \eqref{eq_thm_util} stated in this theorem.
\end{itemize} 
\noindent  \emph{Step 1:}
Firstly, we relax the constraints of \eqref{eq_util_main} by replacing the actual capacity region $\Lambda$, with the Bethe approximated capacity region defined below:
\begin{align}
\Lambda_B:=\{y \in [0,1]^{\N} \; | \; y^{(j)} \in \Lambda_j, \; \forall j \in \N\}. \label{eq_bethe_cap}
\end{align}
From Section \ref{local_gibbs}, recall that $y^{(j)}=[y_k]_{k \in \N_j}$ is the local service rate vector, and $\Lambda_j$ defined in \eqref{eq_loc_cap} is the local capacity region which is nothing but the convex hull of the local feasible schedules $\I_j$.  In other words, a local service rate vector $y^{(j)}$ belongs to $\Lambda_j$, if and only if there exists a distribution $\b_j$ on the local feasible schedules $\I_j$, that supports the service rates $y^{(j)}$, \ie, $ y_k= \sum_{\{\x^{(j)} \in \I_j | x_k=1\}} \b_j(\x^{(j)}), \; \;  k \in \N_j$. As this definition of $\Lambda_B$ imposes only local feasibility of service rates, it is easy to argue that the Bethe capacity region is a relaxation of the actual capacity region, \ie, $\Lambda \subseteq \Lambda_B$.


Secondly, we scale the objective function  \eqref{eq_util_main} by a factor $\theta$, and add local entropy terms. The resulting new optimization problem is given by
\begin{align}
&\max_{y, \{\b_j\}}  \; \; \theta \;\sum_{j=1}^N U_j(y_j) + \sum_{j=1}^N H(\b_j), \; \; \; \; \text{subject to } \label{eq_util_bethe} \\
&y \in [0,1]^{\N}; \; \; \b_j(\x^{(j)}) \geq 0, \; \; \x^{(j)} \in \mathcal{I}_j, \; j=1 \dots N ; \; \nonumber\\
 &\sum_{\x^{(j)} \in \mathcal{I}_j} \b_j(\x^{(j)})=1,\; \;  j=1 \dots N; \nonumber \\
& \;  y_k= \sum_{\{\x^{(j)} \in \I_j | x_k=1\}} \b_j(\x^{(j)}), \; \;  k \in \N_j,\; j=1 \dots N . \label{eq_util_constraint}
\end{align}
Here, the constraint set is specified by explicitly expanding the definition of $\Lambda_j$ involved in description of $\Lambda_B$ \eqref{eq_bethe_cap}. For a large $\theta$, this new objective function \eqref{eq_util_bethe} closely approximates the original objective \eqref{eq_util_main} since the entropy is bounded from above and below. The advantage of defining this new optimization problem is that it is amenable to a distributed solution \cite{libin}. In particular, our local algorithm solves this problem in a distributed fashion.

\noindent  \emph{Step 2:} In the following Lemma, we show that Algorithm 2 solves the new optimization problem \eqref{eq_util_bethe}. 
\begin{lemma} \label{lemma_sub_grad}
Let $(\underline{s}, \{\underline{\b}_j\})$ be the solution of the optimization problem \eqref{eq_util_bethe}. Then the service rates $\{s_j(t)\}$ \eqref{eq_one_dim} in Algorithm 2 converge to the limit $\underline{s}=\{\underline{s}_j\}$.
\end{lemma}
\begin{proof}
The outline of the proof is to show Algorithm 2 corresponds to the dual subgradient method for \eqref{eq_util_bethe}. The proof is provided in Appendix \ref{proof_sub_grad}. 
\end{proof}

\noindent {Step 3:}
We now show that the solution of the optimization problem \eqref{eq_util_bethe} achieves the performance guarantee claimed in \eqref{eq_thm_util}. In other words, we show that $\underline{s}=\{\underline{s}_j\}$ satisfies \eqref{eq_thm_util}.

Recollect that we have added the local entropy terms to the actual objective function to obtain a new problem \eqref{eq_util_bethe}. Here, we shall evaluate the effect of these entropy terms on the optimal utility. To that end, let us consider $(\underline{s}, \{\underline{\b}_j\})$, the solution to the maximization problem \eqref{eq_util_bethe}.  Let us also consider the following optimization problem which does not have the local entropy terms. $s^*:= \arg \max_{y \in \Lambda_B} \sum_j U_j(y_j).$
Note that $s^*$ could be different from $\underline{s}$. Also, from the definition of $\Lambda_B$, for every $y \in \Lambda_B$ there exists some $\{\b_j\}$ such that $(y, \{\b_j\})$ is feasible for the problem \eqref{eq_util_bethe}. Hence for $s^* \in \Lambda_B$, there exists some $\{\b_j^*\}$ such that $(s^*, \{\b_j^*\})$ is feasible for \eqref{eq_util_bethe}. Then,
\begin{align}
\theta \sum_j U_j(s^*_j) &\leq \theta \sum_j U_j(s_j^*) + \sum_j H(\b^*_j), \nonumber\\
&\overset{(a)}{\leq} \theta \sum_j U_j(\underline{s}_j) + \sum_j H(\underline{\b}_j), \nonumber\\
&\overset{(b)}{\leq} \theta \sum_j U_j(\underline{s}_j) + \sum_j \log |\mathcal{I}_j|, \label{eq_entropy1}
\end{align}
where $(a)$ follows from the fact that $(\underline{s}, \{\underline{\b}_j\})$ maximizes \eqref{eq_util_bethe}, $(b)$ follows from the fact that entropy $H(\b_j) \leq \log |\mathcal{I}_j|$. Next, using the fact that $\Lambda \subseteq \Lambda_B$, we have
\begin{align}
\max_{y \in \Lambda} \sum_j U_j(y_j) &\leq \max_{y \in \Lambda_B} \sum_j U_j(y_j) = \sum_j U_j(s^*_j). \label{eq_entropy2}
\end{align}

\noindent Using \eqref{eq_entropy2} in \eqref{eq_entropy1} completes the proof of Theorem \ref{thm_util}.
\end{proof}

We conclude this section by contrasting our utility maximization algorithm with a prior work \cite{libin} which is also a subgradient descent based algorithm. The key difference between these works is in the complexity involved in computing the subgradients. Specifically, in our algorithm, the complexity of computing the subgradient is independent of the total size of the network. On the other hand, it could be exponentially large in the network size, for the algorithm given in \cite{libin}.

\section{Numerical results} \label{simulations}
\begin{figure*} 
\begin{subfigure}{.33 \textwidth}
\centering
\includegraphics[scale=0.28]{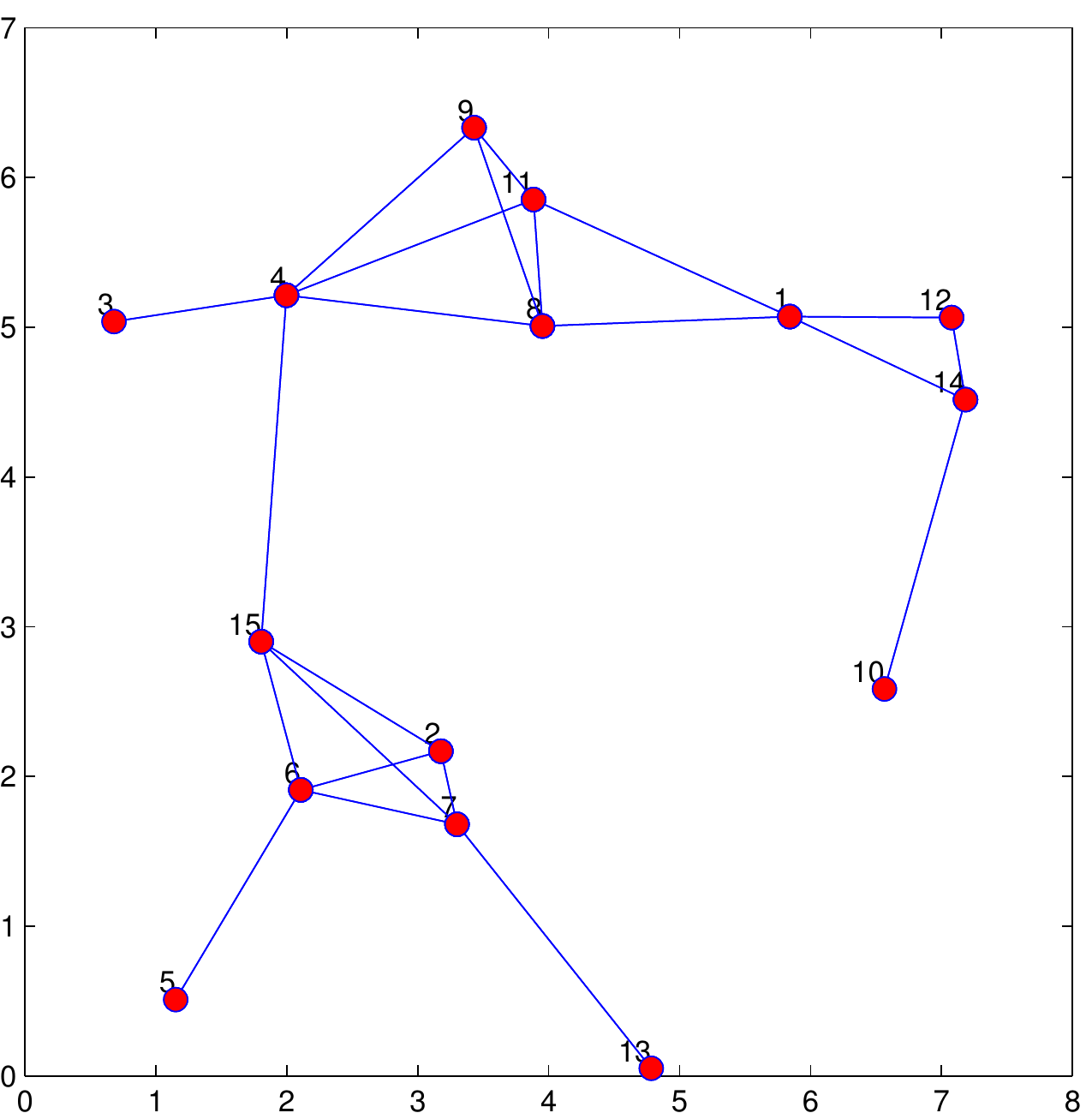} 
\caption{15-link random topology} 
\label{fig_rg15}
\end{subfigure}
\begin{subfigure}{.33 \textwidth}
\centering
\includegraphics[scale=0.34]{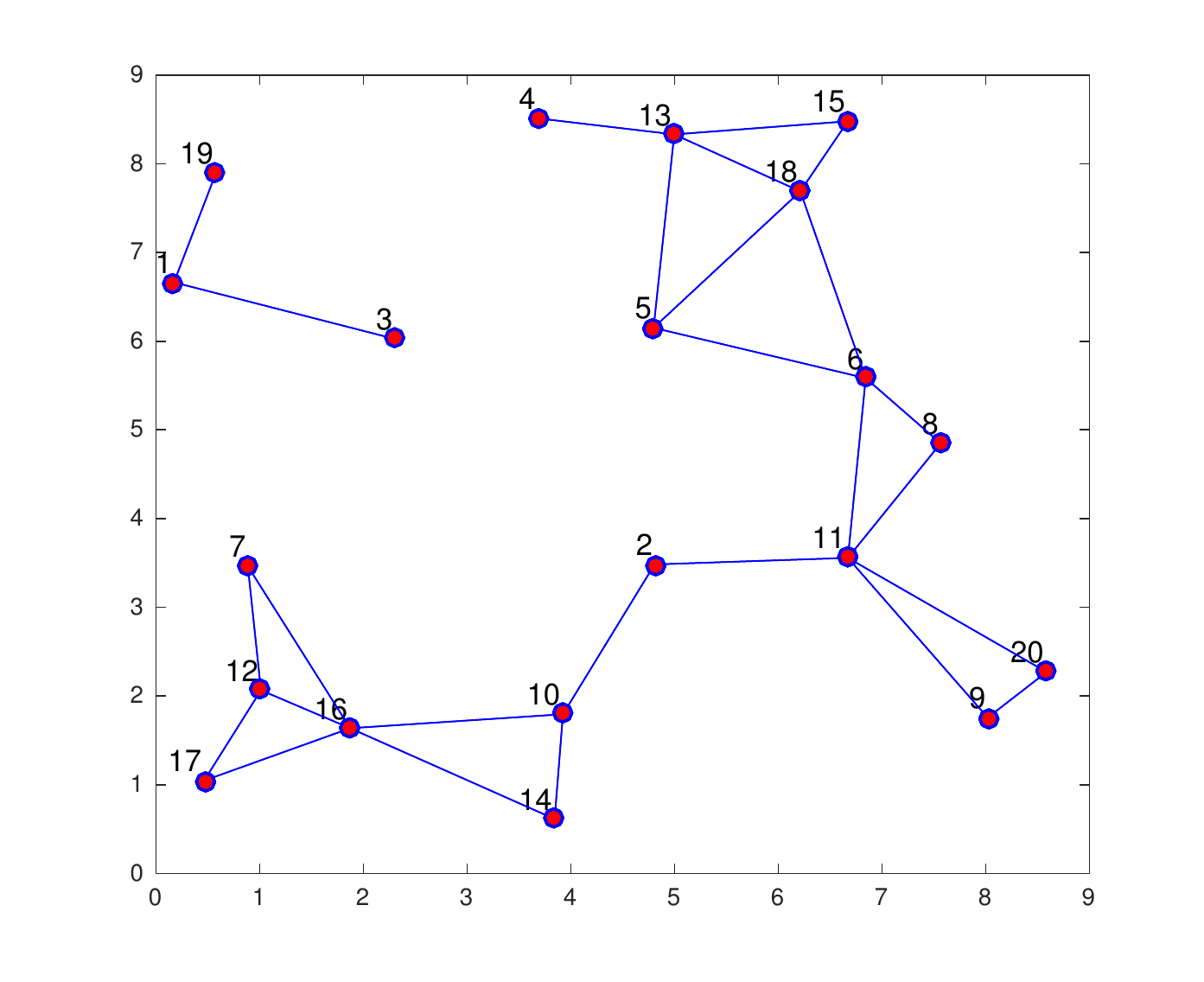} 
\caption{20-link random topology} 
\label{fig_rg20}
\end{subfigure}
\begin{subfigure}{.33 \textwidth}
\centering
\includegraphics[scale=0.34]{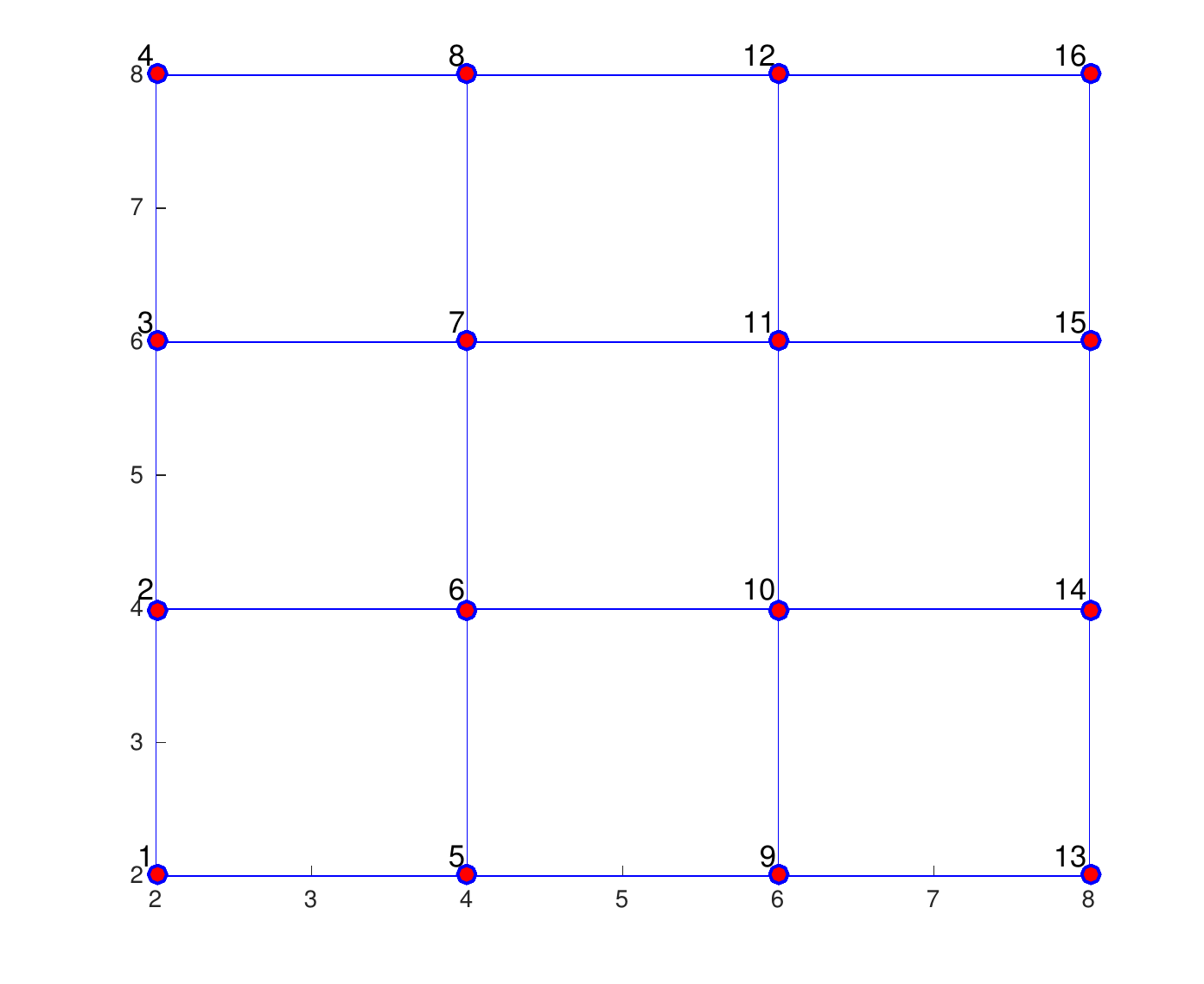}
\caption{4 $\times$ 4 grid topology}
\label{fig_grid_graph}
\end{subfigure}
\caption{Illustration of some interference graphs used: Each vertex represents a link in the network. An edge is present if two links are in the interference range.}
\label{fig_all}
\end{figure*}

\begin{figure*}
\begin{subfigure}{.5 \textwidth}
\centering
\includegraphics[scale=0.38]{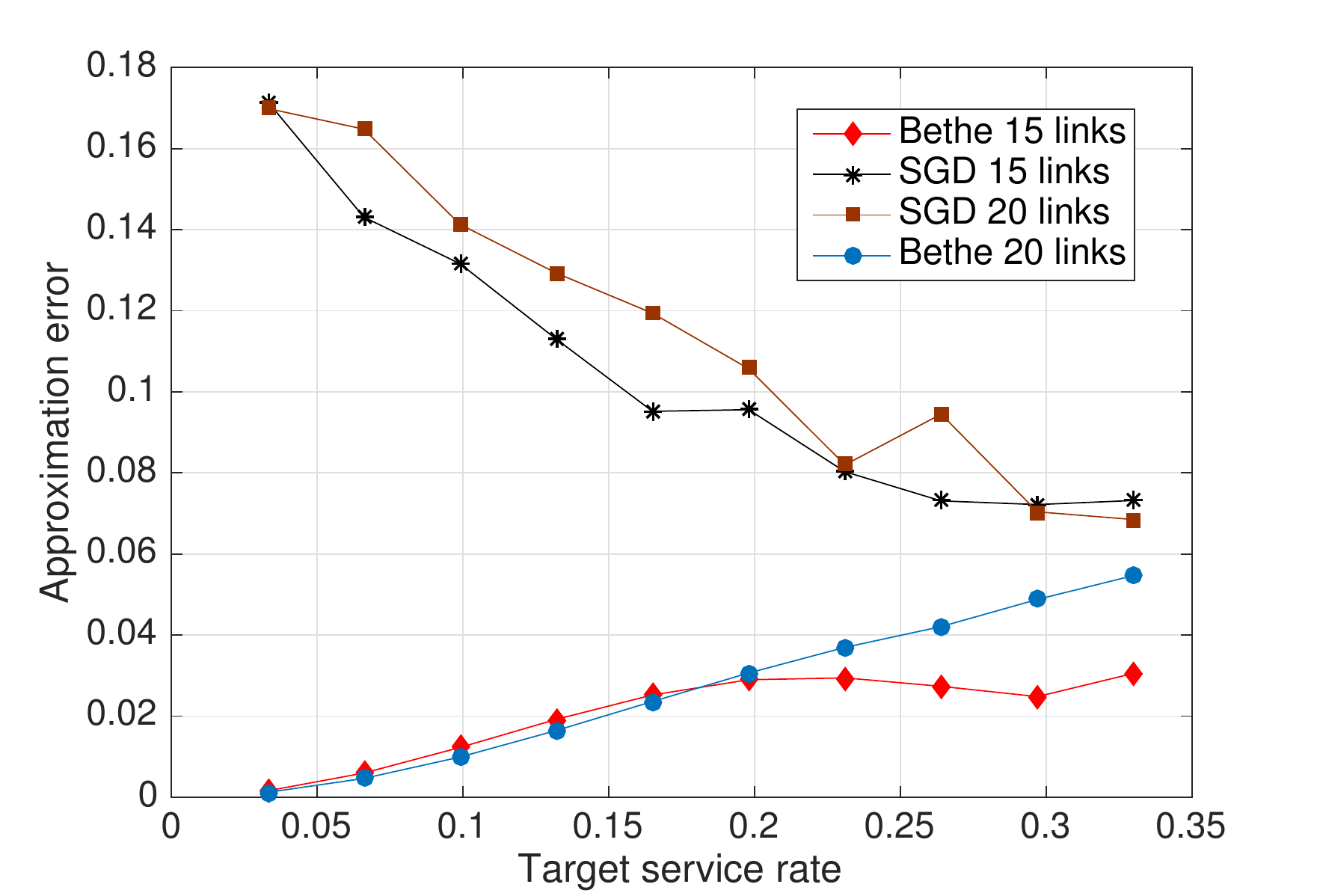} 
\caption{Error as a function of target service rate for the random topologies}
\label{fig_be_load_20}
\end{subfigure}
\begin{subfigure}{.5 \textwidth}
\centering
\includegraphics[scale=0.4]{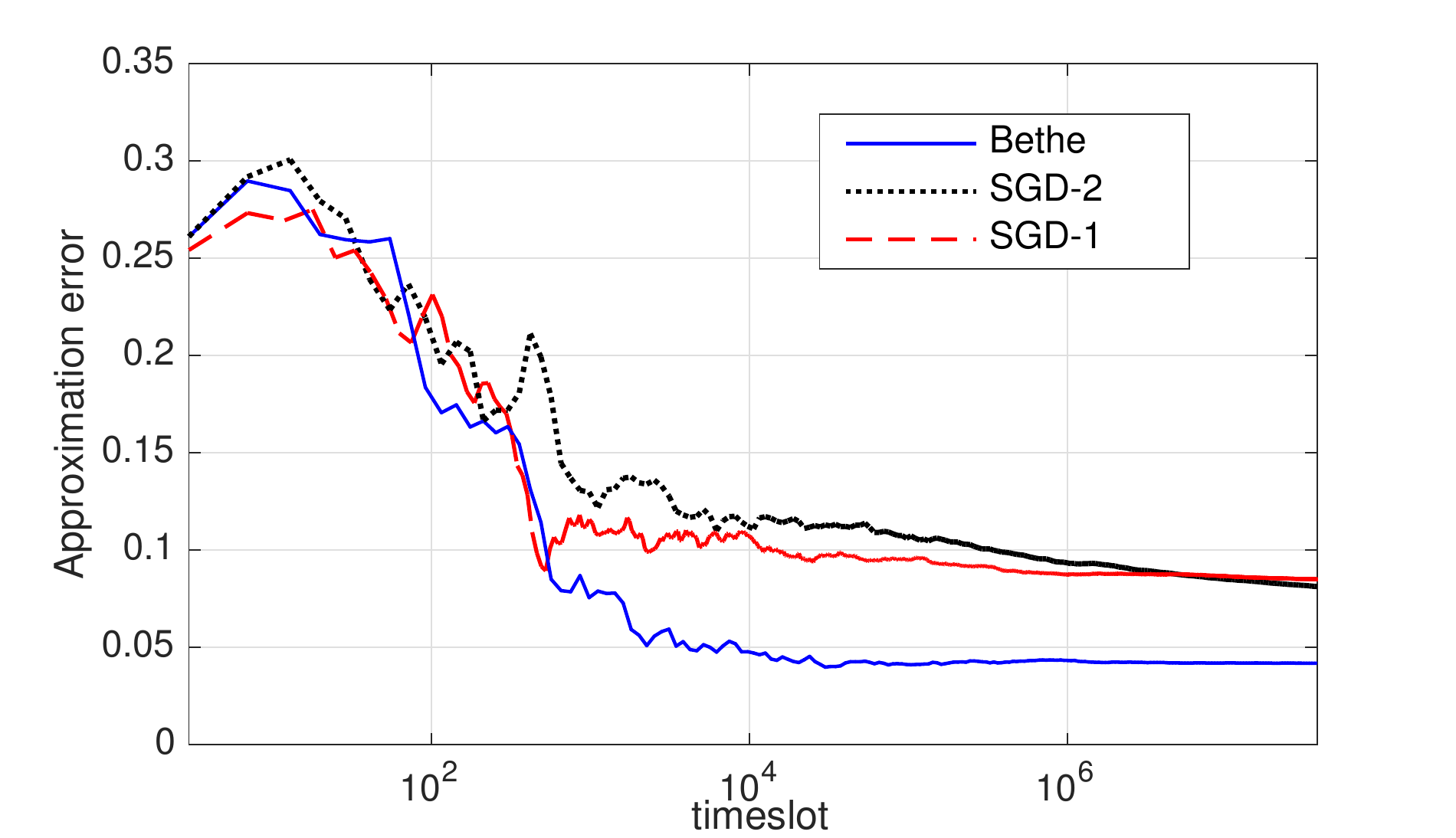}
\caption{Error as a function of time for 20-link random graph}
\label{fig_time}
\end{subfigure}
\caption{Comparision of errors due to SGD and our Bethe approximation based algorithms for random topologies in Figure \ref{fig_all}}
\end{figure*}

\begin{figure*}
\begin{minipage}{.5 \textwidth}
\centering
\includegraphics[scale=0.38]{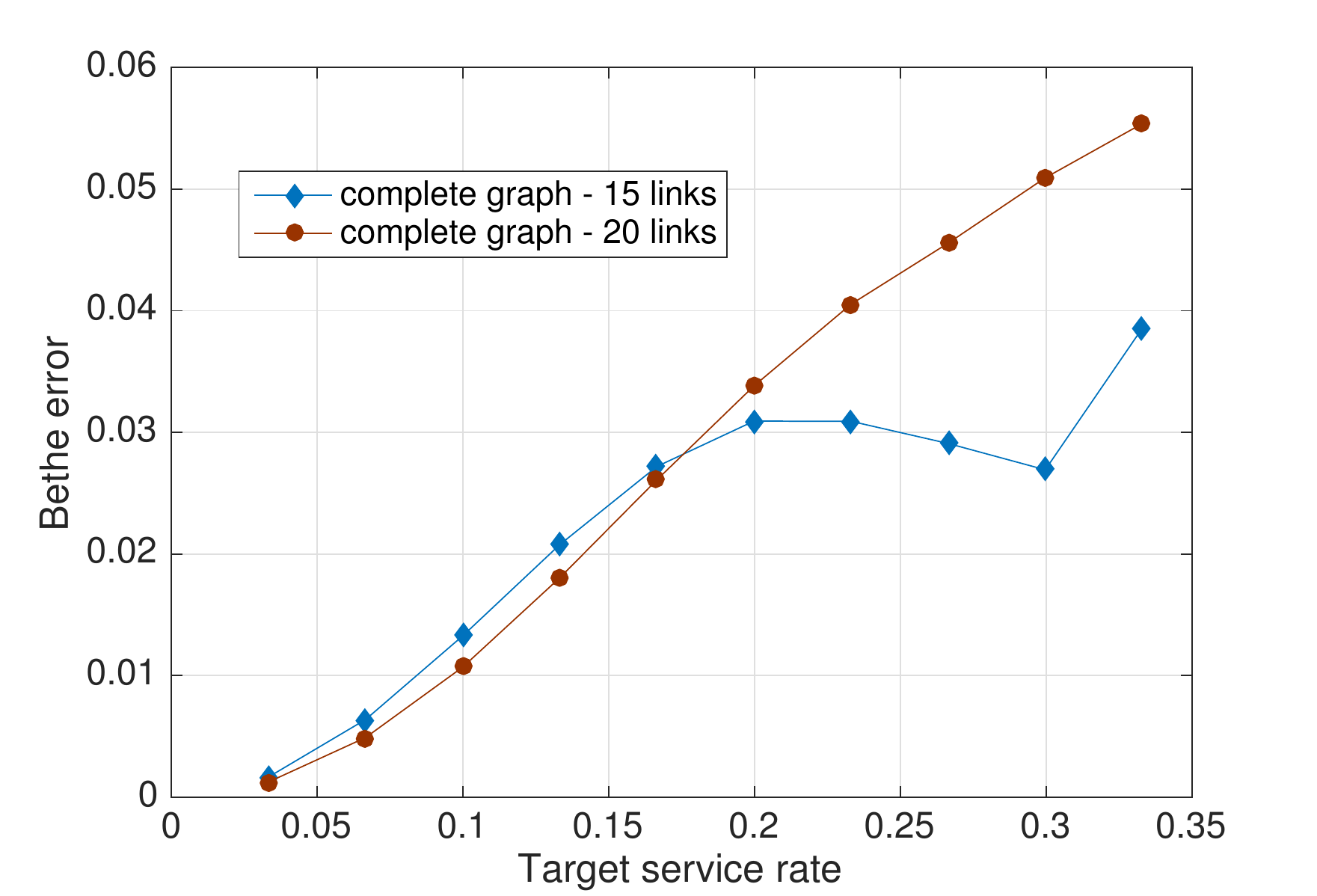} 
\caption{Bethe error for complete graph topology}
\label{fig_completegraph_error}
\end{minipage}
\begin{minipage}{.5 \textwidth}
\centering
\includegraphics[scale=0.4]{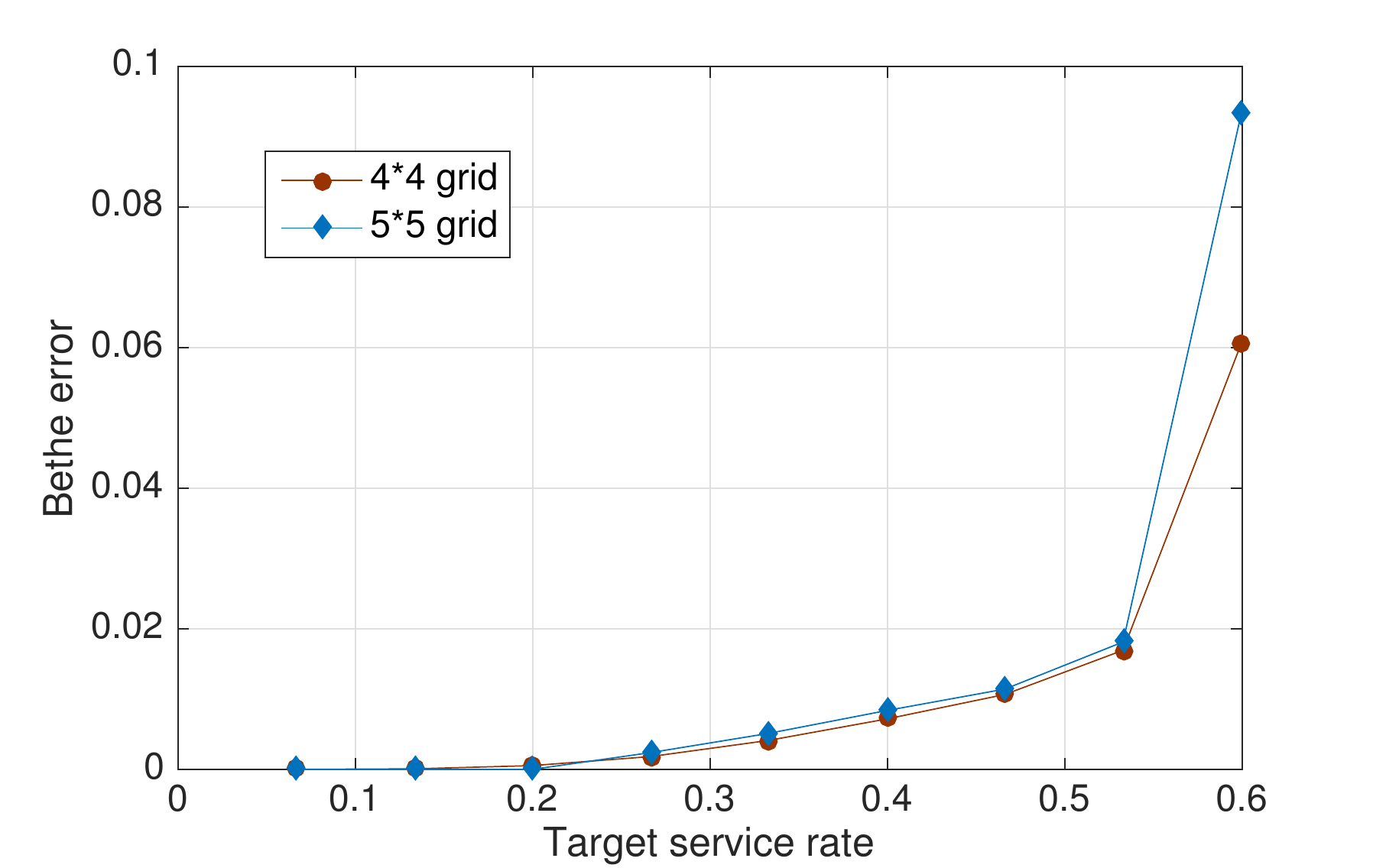}
\caption{Bethe error for grid topology}
\label{fig_gridgraph_error}
\end{minipage}
\end{figure*}

\begin{figure}
\centering
\includegraphics[scale=0.46]{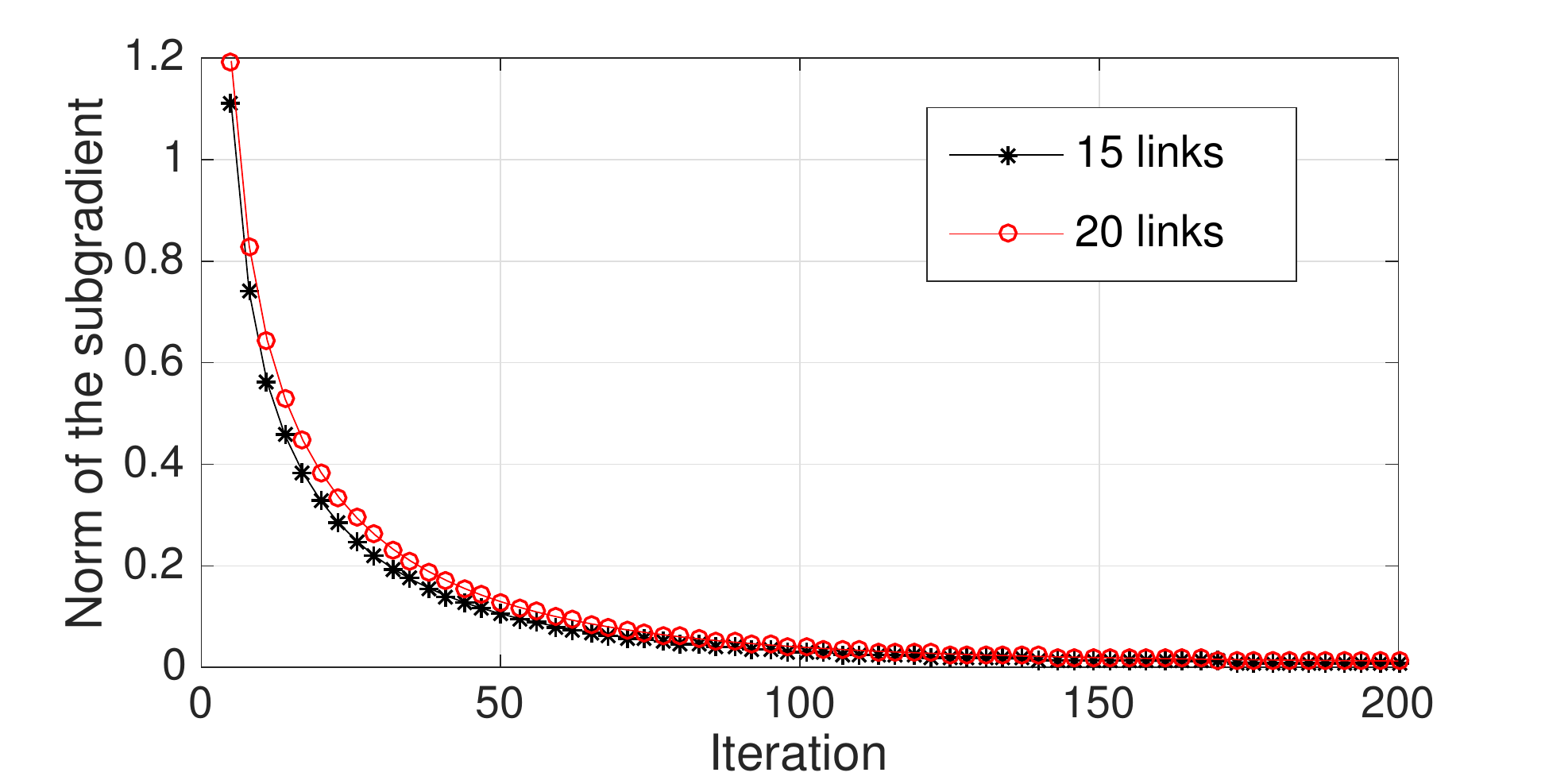} 
\caption{Convergence of local utility maximization algorithm}
\label{fig_umax}
\end{figure}
In this section, we present simulations to evaluate the performance of our algorithms. We consider three topologies namely random graphs, complete graph, and grid graph.
\subsection{Bethe error}
\emph{Simulation setting for random graphs:} We generate a spatial random network by uniformly placing the transmitter nodes on a two dimensional square plane of length $8$.  Each transmitter is associated with a receiver at a distance of $0.5$ in a random direction. The path loss exponent $\alpha$ is set to $3$, the close-in radius $R_I$ is set to $2.4$, and the threshold SINR is set to $15$ dB. The transmit power of the links is set to $1$. We consider equal service rate requirements for all the links. For a given target service rate vector $s^t=[s_i^t]_{i=1}^N$,  we define the Bethe approximation error as 
\begin{align*}
e(s^t)= \frac{\sum_{i=1}^N  |s^t_i - s^a_i|}{N},
\end{align*}
where $s^a=[s_i^a]_{i=1}^N$ are the service rates that can be supported by the approximated fugacities $\{\lt\}_{i=1}^N$. 

\emph{Bethe error as function of target service rate:}
We generated two random networks of sizes $15$ and $20$ links (shown in Figure \ref{fig_rg15}, \ref{fig_rg20}). We compared the error of our approximation algorithm with the residual error of the SGD algorithm \cite{libin} after running the SGD for $10^{8}$ time-slots of the CSMA algorithm. In Figure \ref{fig_be_load_20}, we plot the approximation error as a function of the target service rate. Further, for the considered SINR constraints, we numerically observed that for both the random topologies in Figure \ref{fig_all}, CSMA can support till a  service rate of $0.33$ for all the links. Hence, we varied the target service rate from $0$ to $0.33$. It can be observed from Figure \ref{fig_be_load_20}, that for practical time-scales, our algorithm results in better accuracy than the SGD for all the target service rates.

\emph{Error as a function of time:} In Figure \ref{fig_time}, for a fixed target service rate of  $0.25$, we plot the approximation error as a function of time, by using the time-averaged service rates observed from CSMA algorithm. Here, our local Gibbs algorithm computes the Bethe approximated fugacities, and uses these static fugacities (\ie, they are not adapted during the algorithm) in the CSMA algorithm. The SGD algorithm starts with some initial fugacities, and adapts the fugacities by observing the corresponding service rates. We simulated two versions of the SGD algorithm (SGD-1, SGD-2) proposed in \cite{libin,libin_arxiv}, whose details are as follows: The update rule of SGD algorithm has two functions to be chosen, namely update interval $T(j)$, and step size $\alpha(j)$, for the $j^{th}$ iteration of the gradient descent. The update rule for SGD-1 \cite[Section II-D]{libin} is given by $\alpha(j)=\frac{1}{(j+2)\log(j+2)}$, $T(j)=j+2$. The update rule for SGD-2 \cite[Scheduling Algorithm 1]{libin_arxiv} is given by $\alpha(j)=\frac{1}{j}$, $T(j)=\exp(\sqrt{j})$. 

Although the SGD algorithm will eventually converge to the exact fugacities, from Figure \ref{fig_time}, it can be observed that for practical time-scales of the order of $10^8$ time slots, the residual error is rather large compared to our approximation algorithm. This is because, the CSMA Markov chain has to mix only one time for the Bethe approximation based approach, as the fugacities are static. However, in the SGD based approach, for every update in the fugacities, the Markov chain tries to mix to a new steady state distribution.

\emph{Complete graph and Grid graph:}
Here, we consider two complete graph topologies with sizes $15$ and $20$, and two grid topologies of sizes $16$ and $25$. A $4 \times 4$ grid topology is illustrated in Figure \ref{fig_grid_graph}.  The range of the target service rates is chosen by numerically observing the maximum supportable service rates for the respective topologies. We plot the Bethe error for these two topologies as a function of the target service rates in Figure \ref{fig_completegraph_error}, \ref{fig_gridgraph_error}. As seen from the plots, the error is considerably small for both the topologies.

\subsection{Convergence of Utility maximization algorithm}
Here, we consider the case of computing the fugacities for the proportional fairness utility setting, \ie, $U_i(s_i)= \log s_i$ for all the links. We update the fugacities using Algorithm 2 proposed in Section \ref{util_max}. In Figure \ref{fig_umax}, we plot the norm of the subgradient corresponding to the  maximization problem \eqref{eq_util_bethe}, which indicates the convergence of the algorithm. The convergence is plotted for the two random topologies of size $15$ and $20$. It can be observed that the proposed algorithm converges within 200 iterations.  

\section{Conclusions} \label{conc}
We considered the adaptive CSMA algorithm under the SINR interference model, which is known to be throughput optimal. Under this model, we first proposed a distributed algorithm, namely the local Gibbsian method to efficiently estimate the fugacities, for a given service rate requirements. The convergence rate and the complexity of the proposed algorithm depend only on the maximum size of a link's neighbourhood. We proved that our approximation corresponds exactly to performing the well known Bethe approximation to the global Gibbsian problem. We also proposed an approximation algorithm to estimate the fugacities under a utility maximization framework. Our numerical results indicate that the proposed approximation algorithms can lead to a good degree of accuracy, and improve the convergence time by a few orders of magnitude, compared to the existing stochastic gradient descent methods. 

\section*{Acknowledgement}
 The authors express their sincere gratitude to Dr. Pascal Vontobel for his valuable comments on this work.

\section{Appendix}
\subsection{Proof of Lemma \ref{lemma_max_entropy}} \label{proof_max_entropy}

\begin{figure*}
\begin{center}
  \vspace{-4mm}
\begin{tikzpicture}

\fill (1,2)circle(0.06cm);
\fill (2,2)circle(0.06cm);
\fill (3,2)circle(0.06cm);

\node at (.9,2.2) {\begin{scriptsize}1\end{scriptsize}};

\node at (2,2.2) {\begin{scriptsize}2\end{scriptsize}};

\node at (3.1,2.2) {\begin{scriptsize}3\end{scriptsize}};

\draw (1,2)-- (2,2) -- (3,2);  

\end{tikzpicture}
\caption{Illustration of a 3-node interference graph.} 
  \label{fig_cliq}
  \end{center}
  \vspace{-2mm}
  \end{figure*}
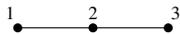 
  
\begin{table*}[t]
\centering
\hspace{15mm}   \bf{Variable marginals:} $b_1(x_1)$ \; $b_2(x_2)$ \; $b_3(x_3)$\
  \newline
  
\begin{tabular}{l|l|l}
$b_1(0)= 1-y_1$ & $b_2(0)= 1-y_2$ & $b_3(0)= 1-y_3$\\
$b_1(1)= y_1$ & $b_2(1)= y_2$ & $b_3(1)= y_3$
\end{tabular}

\end{table*}

\begin{table*}[t]
\centering
\hspace{15mm}  {\bf Factor marginals:} $\b_1(x_1,x_2)$ \; $\b_2(x_1,x_2,x_3)$ \; $\b_3(x_2,x_3)$\
\newline
 
\begin{tabular}{l|l|l}
$\b_1(0,0)= 1-y_1-y_2+z_1(1)$   	& $\b_2(0,0,0)= 1- \sum_{i=1}^3 y_i+\sum_{k=1}^3 z_2(k)$ 	& $\b_3(0,0)= 1-y_2-y_3+z_3(1)$\\
$\b_1(0,1)= y_2-z_1(1)$ 	         	& $\b_2(0,0,1)= y_3-z_2(1)-z_2(2)$ 		& $\b_3(0,1)= y_3-z_3(1)$\\
$\b_1(1,0)= y_1-z_1(1)$ 					& $\b_2(0,1,0)= y_2-z_2(1)-z_2(3)$ 	& $\b_3(1,0)= y_2-z_3(1)$\\
$\b_1(1,1)= z_1(1)$ 						& $\b_2(0,1,1)= z_2(1)$ 			& $\b_3(1,1)= z_3(1)$\\
													& $\b_2(1,0,0)= y_1-z_2(2)-z_2(3)$ 		& \\
 													& $\b_2(1,0,1)= z_2(2)$ 			& \\
 													& $\b_2(1,1,0)= z_2(3)$ 		& \\
													& $\b_2(1,1,1)= 0$ 			& \\
\end{tabular}
\caption{Illustration of the variable transformations used in the proof of Lemma \ref{lemma_max_entropy}, for the 3-node topology given in Figure \ref{fig_cliq}}
\label{table_factor}
\end{table*}
Let us consider the Bethe optimization for the BFE defined in \eqref{eq_bfe}:
\begin{align}
\underset{\{\b_i\}_{i=1}^N,  \{b_i\}_{i=1}^N} {\arg \min}\; \;  F_B\left( \{\b_i\}_{i=1}^N,  \{b_i\}_{i=1}^N \right), \text{  subject to}\label{eq_constraint}
\end{align}
\vspace{-2mm}
\begin{align}
\sum\limits_{\x^{(i)} \setminus \{x_j \}} \b_i(\x^{(i)})&= b_j(x_j), \; i \in \N,  j \in \N_i, x_j =1, \label{eq_consistency} \\ 
 \sum\limits_{\x^{(i)} \in \I_i} \b_{i} (\x^{(i)})&=1, \hspace{0.4cm} i = 1 \dots N,\label{eq_sum1}\\
 b_{i} (1)+ b_i(0)&=1, \hspace{0.4cm} i = 1 \dots N. \label{eq_sum2}
\end{align}
As all the constraints of \eqref{eq_constraint} are linear, they can be eliminated by suitable variable transformation to get an equivalent unconstrained problem \cite[chapter 10]{boyd}. Let us now look at the optimization variables in \eqref{eq_constraint}. If we consider the distribution $b_i(\cdot)$, there are $2$ optimization variables associated with it, namely $b_i(0)$ and $b_i(1)$. If we consider the distribution $\b_i(\cdot)$, there is an optimization variable corresponding to each argument $\rho \in \I_i$, \ie, there are $|\I_i|$ variables associated with it, namely $\{ \b_i(\rho) \}_{\rho \in \I_i}$. For convenience, we split the arguments $\rho \in \I_i$ into two sets. We use $A$ to denote the set of arguments in which there is at most one non-zero element, \ie, $A= \{ \rho =[\rho_j]_{j \in \N_i} \in \I_i \; | \; \sum_{j \in \ni} \rho_j \leq 1 \},$ and use $B$ to denote the set of all the other arguments in $\I_i$. Later, we use the definition of these two sets, to propose some variable transformations that eliminate the equality constraints.

\emph{Example:} Before we proceed further, let us consider an example which will be used to illustrate the variable transformations. Consider an interference graph with 3-node line topology as shown in Figure \ref{fig_cliq}. For the ease of illustration, let us assume that the SINR constraints are such that $(1,1,1)$ is the only infeasible local schedule. In other words, $\I_1=\{0,1\}^2$, $\I_2=\{0,1\}^3 \setminus (1,1,1)$, $\I_3=\{0,1\}^2$.

Then for this example, let us consider node $2$, and split the arguments of $\b_2(\rho)$ into sets $A$ and $B$. As defined earlier, the set of arguments which contain at most one non-zero element is given by $A=\{(0,0,0), (1,0,0), (0,1,0), (0,0,1)\}$. The other set $B$ containing feasible arguments with more than one non-zero element is given by $B=\{ (0,1,1), (1,0,1), (1,1,0)\}$. Similarly, if we consider node $1$, the corresponding sets will be $A=\{(0,0), (0,1), (1,0)\}$ and $B=\{(1,1)\}$.

\emph{Variable transformations:}
Let us now perform the following variable transformations to eliminate the constraint equations, and thereby obtain an equivalent unconstrained optimization problem. For each $i \in \N$,
\begin{itemize}
\item[(i)] Replace the variable $b_i(1)$ by a new variable $y_i$.  Hence, by expressing $b_i(0)=1-y_i$, we can eliminate the constraint $b_i(0)+b_i(1)=1$ in \eqref{eq_sum2}.
\item[(ii)]  Let a new variable $\z_i =[z_i(k)]_{k=1}^{|B|}$ (which is a vector of length $|B|$) replace the optimization variables corresponding to the arguments in $B$.  For example, if we consider the node $1$, the set $B=\{(1,1)\}$ has only one element. We use $\z_1=(z_1(1))$, a vector of length one to replace $\b_1(1,1)$. Similarly if we consider node $2$, its set $B=\{ (0,1,1), (1,0,1), (1,1,0)\}$ has $3$ elements. We use the vector $\z_2=(z_2(1), z_2(2),z_2(3))$ to replace $\{\b_2(0,1,1), \b_2(1,0,1), \b_2(1,1,0)\}$ respectively. This transformation is shown in Table \ref{table_factor}.

\item[(iii)]  If an argument $\rho \in A$ has the non-zero element in $j$th position (\ie, $\rho_j$  $=$ $1$), then its corresponding variable $\b_i(\rho)$ can be expressed only in terms of the vector $\z_i$  and variable $y_j$. This transformation is to replace the local consistency condition in \eqref{eq_consistency}. For example, if we consider the 3-node graph, the argument of the probability $\b_2(1,0,0)$ has its non zero element at position $j=1$.  Now consider the local consistency constraint \eqref{eq_consistency} for $i=2, j=1$ to obtain $\b_2(1,0,0) + \b_2(1,0,1) + \b_2(1, 1, 0) = b_1(1).$ Then to capture the above constraint, the transformation $\b_2(1,0,0)= y_1-z_2(2)-z_2(3)$ can be used. Thus $\b_2(1,0,0)$ is expressed only in terms of the elements of the vector $\z_2$, and variable $y_1$. It is shown in Table \ref{table_factor}.
\item[(iv)] The variable $\b_i(\rho)$ corresponding to $\rho = (0,0, \dots, 0)$ can be expressed in terms of the vector $\z_i$ and $(y_j , j \in \ni )$. This is done to eliminate \eqref{eq_sum1}.  For example, at node $1$, the constraint $\b_1(0,0)+\b_1(0,1) + \b_1(1,0) +\b_1(1,1)=1$ can be eliminated by expressing $\b_1(0,0)$ in terms of the vector $\z_1$ and $(y_1,y_2)$ as shown in Table \ref{table_factor}.
\end{itemize}
By following the above steps, the distribution $\b_i(\x^{(i)})$ can be expressed as a linear function of $(\z_i, \{y_j\}_{j \in \ni})$. Upon this transformation, the free energy $F_B(\bn,\bv)$ \eqref{eq_bfe} is expressed in terms of the new variables as
\begin{align}
&F_B \left( {\left\lbrace \z_{i} \right\rbrace}_{i \in \N}, {\left\lbrace y_{i} \right\rbrace}_{i \in \N} \right) \label{eq_fb_new}\\
&= \sum\limits_{i=1}^N \left[-(\ln \lambda_i) y_i - \H_i(\z_i, \{y_j\}_{j \in \ni})  + (d_i-1) H_i(y_i)\right], \nonumber
\end{align}
and the equivalent unconstrained optimization problem is to minimize $F_B \left({\left\lbrace \z_{i} \right\rbrace}_{i \in \N}, {\left\lbrace y_{i} \right\rbrace}_{i \in \N}\right)$ over the new variables. 
Technically, there should be additional inequality constraints on these new variables to impose positivity constraints described in \eqref{eq_bethe_min}. However, it can be shown that any stationary point of the BFE given in \eqref{eq_bfe} implicitly satisfies those positivity constraints  \cite[Remark 4.1 in Page 85]{book_martin}.

Now suppose $\left({\left\lbrace \z^*_{i} \right\rbrace}_{i \in \N}, {\left\lbrace y^*_{i} \right\rbrace}_{i \in \N}\right)$ is a stationary point of \eqref{eq_fb_new}. Then the gradient of $F_B ( {\left\lbrace \z_{i} \right\rbrace}_{i \in \N}, {\left\lbrace y_{i} \right\rbrace}_{i \in \N})$ at that stationary point should be zero. In particular, if we take the partial derivative of $F_B ( {\left\lbrace \z_{i} \right\rbrace}_{i \in \N}, {\left\lbrace y_{i} \right\rbrace}_{i \in \N})$ \eqref{eq_fb_new} with respect to the elements of the vector $\z_i=\{z_i(k)\}_{k=1}^{|B|}$,  all the terms in \eqref{eq_fb_new} other than the term corresponding to $\H_i$ vanishes. Therefore, setting  $\frac{\partial F_B \left( {\left\lbrace \z_{i} \right\rbrace}_{i \in \N}, {\left\lbrace y_{i} \right\rbrace}_{i \in \N} \right)}{\partial z_i(k)}= 0$, we have for $k=1 \text{ to } |B|$,
\begin{align}
 \left. \frac{\partial \H_i\left(\z_i, \{y_j\}_{j \in \ni}\right)}{\partial z_i(k)}\right|_{\left(\z_i, \{y_j\}_{j \in \ni}\right)=\left(\z_i^*, \{y^*_j\}_{j \in \ni}\right)}  &= 0. \label{eq_partialz}
\end{align}
Next, we consider the maximum entropy property \eqref{opt_local_entropy} stated in Lemma \ref{lemma_max_entropy}, and argue that it essentially boils to down to the above system of equations \eqref{eq_partialz}. Firstly, since the constraints of \eqref{opt_local_entropy} are same as the constraints required for the Bethe optimization problem \eqref{eq_constraint}, they can be eliminated using the variable transformations used in this proof. From \eqref{eq_dual}, it can be observed  that all the variable marginals are fixed at $\{b_j^*\}_{j \in \N_i}$, and the optimization \eqref{opt_local_entropy} is done only over the factor marginals $\b_i$. In terms of the transformed variables, it essentially boils down to maximizing the entropy $\H_i\left(\z_i, \{y_j\}_{j \in \ni}\right)$ subject to fixing the variable marginals at $\{y_j^*\}_{j \in \N_i}$, which is captured by \eqref{eq_partialz}. This observation essentially asserts that the factor marginals $\b_{i}^*$, and the variable marginals $\{b_j^*\}_{j \in \ni}$ corresponding to a stationary point $(\bn^*,\bv^*)$ are related by the maximum entropy problem defined in \eqref{opt_local_entropy}.

\subsection{Proof of Lemma \ref{lemma_fug}} \label{proof_fugacities}
This proof is a continuation of the proof of Lemma \ref{lemma_max_entropy}. By interpreting the factor and variable marginals $(\bn,\bv)$ as linear functions of the variables $\left({\left\lbrace \z_{i} \right\rbrace}_{i \in \N}, {\left\lbrace y_{i} \right\rbrace}_{i \in \N}\right)$, and setting the  partial derivative of $F_B\left({\left\lbrace \z_{i} \right\rbrace}_{i \in \N}, {\left\lbrace y_{i} \right\rbrace}_{i \in \N}\right)$ in \eqref{eq_bfe} with $y_i$ to zero, we obtain

\vspace{-3mm}
\begin{scriptsize}
\begin{align}
\ln \lambda_i&= \left.\frac{\partial \Big[(d_i-1) H_i(b_i) - \sum\limits_{j \in \ni} \H_j(\b_j)   \Big] }{\partial y_i}\right|_{\left(b_i, \{\b_j\}_{j\in \N_i}\right)=\left(b_i^*, \{\b_j^*\}_{j\in \N_i}\right)}. \label{eq_factor_global}
\end{align}
\end{scriptsize}
\vspace{-3mm}

Observe the following from steps (ii)-(iv) of the variable transformation in the proof of Lemma \ref{lemma_max_entropy}: The distribution $\b_j(\rho)$ depends on $y_i$, if only if the argument is either the all zero pattern, \ie, $(0,0, \dots, 0)$ or if $\rho_i=1$, is the only non zero element in that argument.  Let us denote this argument as $e^{(i)}:= (0, \dots,0,1,0, \dots, 0)$, where $1$ is in the $i^{th}$ position. Hence, only two terms of the entropy $\H_j(\b_j)$ depend on $y_i$.  (See Table \ref{table_factor}. For example, only two terms of $\b_2$, namely $\b_2(1,0,0)$ and $\b_2(0,0,0)$ depend on $y_1$.)

Using this observation in \eqref{eq_factor_global}, and simplifying gives us $\lambda_i= \left( \frac{\left(b^*_i(0)\right)^{d_i-1} \prod\limits_{j \in \ni} \b^*_j(e^{(i)})}{\left(b^*_i(1)\right)^{d_i-1} \prod\limits_{j \in \ni} \b^*_j(0, 0, 0, \dots, 0)}\right).$ Then we conclude the proof of this lemma by applying \eqref{eq_opt_factor_marg} from Lemma \ref{lemma_libin}, which gives us $\lambda_i= \left(\frac{b^*_i(0)}{b^*_i(1)}\right)^{d_i-1}\prod\limits_{j \in \ni} e^{v_{ji}}$.

\subsection{Sufficient condition for a stationary point of the BFE} \label{proof_complete}
The condition \eqref{eq_partialz} in Lemma 2 is obtained when the partial derivatives of $F_B \left( {\left\lbrace \z_{i} \right\rbrace}_{i \in \N}, {\left\lbrace y_{i} \right\rbrace}_{i \in \N} \right)$ \eqref{eq_fb_new} with respect to the elements of the variables $\{\z_{i}\}$ are set to zero. Similarly, the condition \eqref{eq_factor_global} in Lemma 4 are obtained when the partial derivatives of $F_B \left( {\left\lbrace \z_{i} \right\rbrace}_{i \in \N}, {\left\lbrace y_{i} \right\rbrace}_{i \in \N} \right)$ with respect to $\{y_i\}$ are set to zero. Hence, the properties derived in Lemmas 2 and 4 together constitute a sufficient condition for a stationary point of the BFE.

\subsection{Proof of Theorem \ref{thm_lf_cg}} \label{proof_lf_cg}
As discussed in Section \ref{subsec_acsma}, the global Gibbsian problem \eqref{opt_global} essentially solves a system of equations given in \eqref{eq_serv_fug}. From that analogy, it suffices to show that $\lf_i=[\lf_{ij}]_{j \in \N_i}$, the solution of the local Gibbsian optimization problem \eqref{opt_local_alternate0} at a link $i$,  is consistent with the following system of equations:
\begin{align}
s_j&= \sum\limits_{\x^{(i)} \in \mathcal{I}_i\; : \; x_j=1 } \frac{1}{Z_i} \Big(\prod\limits_{ k \in \N_i \; : \; x_k=1} e^{\lf_{ik}}\Big), \; \; \forall j \in \N_i, \label{eq_serv_cg} \\
\text{ where } &Z_i= \sum\limits_{\x^{(i)} \in \mathcal{I}_i } \Big(\prod\limits_{ k \in \N_i \; : \; x_k=1} e^{\lf_{ik}} \Big). \label{eq_z_cg}
\end{align}

For the conflict graph model, the above equations can be simplified as follows. In the conflict graph model, a link is active if and only if all its neighbours are inactive. Hence, there is only one local feasible schedule $\x^{(i)} \in \I_i$ in which link $i$ is active, namely $\x^{(i)}=(0,\dots,0,1,0,\dots,0)$ where $1$ is in the $i^{th}$ position. Using this observation in \eqref{eq_serv_cg} with $j=i$, we obtain
\begin{align}
s_i&= \frac{1}{Z_i}e^{\lf_{ii}}. \label{eq_si_cg}
\end{align}
Recall the definition of local feasibility from Section \ref{local_gibbs}. Any local schedule $\x^{(i)} \in \I_i$ at a link $i$, is feasible if that link $i$ is inactive. In other words, if $x_i=0$ in a local schedule $\x^{(i)} \in \{0,1\}^{\N_i}$, all the $2^{|\N_i|-1}$ combinations of its neighbours activations are allowed. This implies that for any $j \in \N_i \setminus \{i\}$, the set $\{\x^{(i)} \in \mathcal{I}_i\; : \; x_j=1 \}$ has all the $2^{|\N_i|-1}$ possible schedules. Using this observation in \eqref{eq_serv_cg} gives us
\begin{align}
s_j&= \frac{1}{Z_i} e^{\lf_{ij}} \prod\limits_{k \in \N_i \setminus \{i, j\}} (1+ e^{\lf_{ik}}), &j \in \N_i \setminus \{i\}. \label{eq_sj_cg}
\end{align}
Similarly, the normalization constant $Z_i$ \eqref{eq_z_cg} can be simplified to
\begin{align}
Z_i&= e^{\lf_{ii}} + \prod\limits_{j \in \N_i \setminus \{i\}} (1+ e^{\lf_{ij}}). \label{eq_z1_cg}
\end{align}
Note that \eqref{eq_si_cg}-\eqref{eq_z1_cg} characterize the relation between the local fugacities, and the service rates under the conflict graph model. Hence, it sufficient to prove that the local fugacities \eqref{eq_lf_cg} stated in Theorem \ref{thm_lf_cg}, are consistent with the set of equations \eqref{eq_si_cg}-\eqref{eq_z1_cg}. This step can be verified by simply substituting the local fugacity expressions \eqref{eq_lf_cg} in \eqref{eq_si_cg}-\eqref{eq_z1_cg}.
\subsection{Proof of Lemma \ref{lemma_sub_grad}} \label{proof_sub_grad}
The outline of the proof is to show that Algorithm 2 corresponds to the dual subgradient method for \eqref{eq_util_bethe}. First, we compute the Lagrangian, and the dual problem for \eqref{eq_util_bethe}. Considering the equality constraints in \eqref{eq_util_constraint}, the partial Lagrangian of \eqref{eq_util_bethe} is given by

\vspace{-3mm}
\begin{scriptsize}
\begin{align}
&L(y, \{\b_j\}; \be) \nonumber\\
&= \theta \sum_j U_j(y_j) + \sum_j H(\b_j) + \sum_{jk} \lf_{jk} \left(\sum_{\x^{(j)} : x_k=1} \b_j(\x^{(j)})-y_k\right). \label{eq_lagrangian}
\end{align}
\end{scriptsize}
\vspace{-3mm}

Here $\bm{\beta}:=\{\lf_{jk}\}$ is a short hand notation for the dual variables $\{\lf_{jk} , j =1 \dots N, k \in \nj\}$. These dual variables have an interpretation of local fugacities. Hence we abuse the notation by using the same notation for both of them.
The dual function is given by $D( \be) = \sup L(y, \{\b_j\}; \be)$ over  $y \in [0,1]^{\N}$ and the distributions $\{\b_{j}\}$ on the local schedules. We require the following result (Lemma \ref{lemma_libin_arxiv}) for completing the proof.

\begin{lemma} \label{lemma_libin_arxiv}
The primal and dual solutions of the optimization problem \eqref{eq_util_bethe} satisfy strong duality. Further, the primal and dual variables are related as follows. For a given dual value $\be=\{\lf_{jk}\}$, the Lagrangian $L(y, \{\b_j\}; \be)$ attains its supremum at primal values given by
\begin{align}
y_j(\bm{\lf})&= \arg \max_{q \in [0,1]} \theta U_j(q) - q \sum_{k \in \nj} \beta_{kj}(t), \; \forall j,\\
\b_j(\x^{(j)};\bm{\lf})&=  {Z_j^{-1}} \exp \Big(\sum_{k \in \N_j} x_k \beta_{jk}(t) \Big),  \forall \x^{(j)} \in \I_j, j \in \N.
\end{align}
\end{lemma}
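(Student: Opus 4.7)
The plan is to first justify strong duality by appealing to standard convex duality theory, then to explicitly compute the supremum of the Lagrangian by showing that it decomposes into independent per-link subproblems whose maximizers coincide with the stated formulas.

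For strong duality, I would observe that \eqref{eq_util_bethe} is a concave maximization: each $U_j$ is concave by assumption, each entropy $H(\b_j)$ is strictly concave in $\b_j$, and all constraints (marginal consistency, probability normalization, and the box constraints on $y_j$ and $\b_j$) are affine. The feasible set has a non-empty relative interior --- for instance, choosing each $\b_j$ to be a strictly positive distribution on $\I_j$ obtained by marginalizing a suitable globally consistent product-form distribution over $\{0,1\}^{\N}$, and defining the $y_k$ from the induced marginals, yields such a point. Strong duality then follows from the standard theorem for convex programs with affine constraints and a Slater-interior primal point (e.g.\ Boyd--Vandenberghe, Sec.~5.2.3).

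For the Lagrangian supremum, I would exploit separability. The coupling term in \eqref{eq_lagrangian} can be rewritten by interchanging the order of summation and using the symmetry $k \in \N_j \Leftrightarrow j \in \N_k$ of the neighborhood relation:
\begin{align*}
-\sum_{j,k} \beta_{jk}\, y_k \;=\; -\sum_{j=1}^N y_j \sum_{k \in \N_j} \beta_{kj},
\end{align*}
while the other half of the coupling groups as $\sum_j \sum_{\x^{(j)} \in \I_j} \b_j(\x^{(j)}) \sum_{k \in \N_j} \beta_{jk}\, x_k$. The Lagrangian therefore decouples into $2N$ independent scalar problems, one in each $y_j$ and one in each $\b_j$. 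The $y_j$-subproblem is literally $\max_{q \in [0,1]} \theta U_j(q) - q \sum_{k \in \N_j} \beta_{kj}$, giving the stated $y_j(\be)$. The $\b_j$-subproblem is $\max_{\b_j} H(\b_j) + \mathbb{E}_{\b_j}\!\bigl[\sum_{k \in \N_j} \beta_{jk} x_k\bigr]$ over distributions on $\I_j$; this is the classical Gibbs variational problem, and introducing a Lagrange multiplier for normalization and setting the derivative with respect to $\b_j(\x^{(j)})$ to zero (or equivalently invoking the identity $\sup_b \bigl(H(b) + \mathbb{E}_b[f]\bigr) = \log \sum_x e^{f(x)}$, attained at the Gibbs distribution) yields $\b_j(\x^{(j)};\be) = Z_j^{-1} \exp\bigl(\sum_{k \in \N_j} \beta_{jk} x_k\bigr)$.

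The only real subtlety lies in the index bookkeeping: the asymmetric appearance of $\beta_{jk}$ (in the $\b_j$-subproblem) versus $\beta_{kj}$ (in the $y_j$-subproblem) must be tracked carefully through the neighborhood symmetry. All remaining steps are routine applications of entropy-maximization duality and one-dimensional concave maximization, and strict concavity of the entropy guarantees that the $\b_j$-maximizer is the unique Gibbs form above.
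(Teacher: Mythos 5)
Your proposal is correct and follows essentially the same route as the paper: strong duality via Slater's condition for a concave objective with affine constraints, followed by exploiting the separability of the Lagrangian to reduce the $y_j$-part to the stated one-dimensional maximization and the $\b_j$-part to the Gibbs variational problem (the paper does the latter by setting $\partial L/\partial \b_j(\x^{(j)})$ to zero, which is equivalent to your entropy-maximization identity). Your explicit tracking of the $\beta_{jk}$ versus $\beta_{kj}$ index swap is a welcome clarification of a step the paper leaves implicit.
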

\begin{proof}
The utility functions $\{U_i\}$ are concave, and the entropy is strictly concave. Hence, in \eqref{eq_util_bethe} we are maximizing a strictly concave function with affine constraints. Further, since the rate region $\Lambda_B$ is non empty, there always exist some $\{b_j\}$ and $y$ such that they are feasible for \eqref{eq_util_bethe}. Hence, the Slater's condition for convex problems with affine constraints \cite[Page 226]{boyd} implies strong duality.

For a given dual variable $\be=\{\lf_{jk}\}$, let $y_j(\bm{\lf})$, $\b_j(\x^{(j)};\bm{\lf})$ be the corresponding primal variables that maximize the Lagrangian. Then from the structure of the Lagrangian \eqref{eq_lagrangian}, it follows that 
\begin{align*}
y_j(\bm{\lf})&= \arg \max_{q \in [0,1]} \theta U_j(q) - q \sum_{k \in \nj} \beta_{kj}(t), \; \forall j.
\end{align*}
Next, consider the following partial derivative of the Lagrangian \eqref{eq_lagrangian} to obtain
\begin{align*}
\frac{\partial L(y, \{\b_j\}; \be) }{\partial \b_j(\x^{(j)})}= - \ln \b_j(\x^{(j)}) - 1 + \sum_{k \in \N_j: x_k=1} \beta_{jk}.
\end{align*}
Hence, the optimal value should satisfy
\begin{align*}
\b_j(\x^{(j)};\bm{\lf})& \;\alpha\; \exp \Big(\sum_{k \in \N_j} x_k \beta_{jk}(t) \Big), \forall \x^{(j)} \in \I_j, \forall j.
\end{align*}
\end{proof}
The subgradient for a given dual variable is equal to the residual error in the corresponding primal constraints (See \cite[Chapter 2]{book_subgradient} for details). Specifically, the subgradient at $\bm{\lf}$ denoted by $g(\bm{\lf}) := \{g_{jk} (\bm{\lf})\}$ is given by
\begin{align*}
g_{jk}(\be)=\Big(\sum_{\x^{(j)} : x_k=1} \b_j(\x^{(j)};\bm{\lf})\Big)-y_k(\bm{\lf}).
\end{align*}
Hence the update rule \eqref{eq_update} essentially corresponds to a dual subgradient method for \eqref{eq_util_bethe}. This completes the proof of Lemma \ref{lemma_sub_grad}.

\bibliographystyle{IEEEtran}
\bibliography{myreferences}

\end{document}